\newtheorem{proposition}{Proposition}
\newtheorem{property}{Property}
\newtheorem{observation}{Observation}
\newtheorem{lemma}{Lemma}
\DeclarePairedDelimiter\floor{\lfloor}{\rfloor}
\newcommand{\deps}{\Delta\epsilon}
\newcommand{\dc}{\Delta c}
\newlength{\textfloatsepsave} \setlength{\textfloatsepsave}{\textfloatsep} 
\title{Designing the Game to Play: Optimizing Payoff Structure in Security Games}
\author{
Zheyuan Ryan Shi$^{1*}$, 
Ziye Tang$^{2}$\thanks{Z. R. Shi and Z. Tang contributed equally to this work.}, 
Long Tran-Thanh$^3$, 
Rohit Singh$^4$,
Fei Fang$^2$
\\ 
$^1$ Swarthmore College, USA \\
$^2$ Carnegie Mellon University, USA\\
$^3$ University of Southampton, UK\\
$^4$ World Wide Fund for Nature, Cambodia\\
zshi1@swarthmore.edu,
ziyet@andrew.cmu.edu,\\
ltt08r@ecs.soton.ac.uk, 
rsingh@wwfnet.org,
feifang@cmu.edu
}
\begin{document}

\maketitle

\begin{abstract}
We study Stackelberg Security Games where the defender, in addition to allocating defensive resources to protect targets from the attacker, can strategically manipulate the attacker's payoff under budget constraints in weighted $L^p$-norm form regarding the amount of change. 
For the case of weighted $L^1$-norm constraint, we present (i) a mixed integer linear program-based algorithm with approximation guarantee; (ii) a branch-and-bound based algorithm with improved efficiency achieved by effective pruning; (iii) a polynomial time approximation scheme for a special but practical class of problems. 
In addition, we show that problems under budget constraints in $L^0$ and weighted $L^\infty$-norm form can be solved in polynomial time. 
\end{abstract}

\section{Introduction}

Research efforts in security games have led to success in various domains, ranging from protecting critical infrastructure~\cite{Letchford:2013:AAAI:marginal,Wang:2016:AAAI:terrorist} and catching fare invaders in metro systems~\cite{Yin:2012:IAAI:trusts}, to combating poaching~\cite{Fang:2016:AAAI:paws} and preventing cyber intrusions~\cite{Durkota:2015:IJCAI:netsec,Basilico:2016:ARES:software}. In these games, a defender protects a set of targets from an attacker by allocating defensive resources. One key element that characterizes the strategies of the players is the payoff structure. Existing work in this area typically treats the payoff structure of the players as given parameters, sometimes with uncertainties known a priori given the nature of the domain. 
However, under various circumstances, the defender is able to change the attacker's payoff, thus rendering the existing models inadequate in expressiveness. 
For example, in wildlife poaching, the law enforcement agency may charge a variable fine if the poacher is caught at different locations, e.g., in the core area vs.\ in the non-core area. 
In cybersecurity, the network administrator may change the actual or appeared value of any network node for a potential attacker.
In these cases, the defender's decision making is two-staged: she chooses the payoff structure, as well as the strategy of allocating defensive resources. With a properly chosen payoff structure, the defender may save the effort by achieving much better utility with the same or even less amount of resources.


As existing work in security games does not provide adequate tools to deal with this problem (see Section 2 for more details), we aim to fill this gap as follows.
%
%
We study how to design the attacker's payoff structure in security games given budget constraints in weighted $L^p$-norm ($p=0,1,\infty$). That is, the distance between the original payoff structure and the modified payoff structure is bounded, using distance metrics such as Manhattan distance (i.e., $L^1$-norm) with varying weights for reward and penalty of different targets.
The intuition behind this setting is that the defender can change the payoffs to make a target that is preferable to the defender more attractive to the attacker and disincentivize the attacker from attacking targets that can lead to a significant loss to the defender.
More change incurs a higher cost to the defender and the defender has a fixed budget for making the changes.
Our findings can be summarized as follows:

\textbf{$L^1$-norm case}: When the budget constraint is in weighted $L^1$-norm form, i.e.\ additive cost, our contribution is threefold. (i) We exploit several key properties of the optimal manipulation and propose a \emph{mixed integer linear program} (MILP)-based algorithm with \emph{approximation guarantee}.
(ii) We propose a \emph{novel branch-and-bound approach} with improved efficiency achieved by effective pruning for the general case.
(iii) Finally, we show that a \emph{polynomial time approximation scheme} (PTAS) exists for a special but practical case where the budget is very limited and the manipulation cost is uniform across targets. The PTAS is built upon the key observation that there is an optimal solution where no more than two targets' payoffs are changed in this restricted case.

\textbf{$L^0$ and $L^\infty$-norm cases}: We propose \emph{a $O(n^3)$ and a $O(n^2\log n)$ algorithm} for problems under budget constraints in $L^0$-norm form and weighted $L^\infty$-norm form\footnote{The $L^0$-norm is not actually a norm, but we use the term for simplicity and its definition is given in Sec.~\ref{subsec:l0}. The definition of weighted $L^\infty$-norm is given in Sec.~\ref{subsec:linfty}.}, respectively, where $n$ is the total number of targets. For $L^0$-norm form budget, i.e.\ limited number of targets to manipulate, our algorithm converts the problem into $O(n^2)$ subproblems and reduces each subproblem to a problem of finding a subset of items with the maximum average weight. The latter can be solved in $O(n)$ time. For $L^\infty$-norm form budget, i.e.\ limited range of manipulation on each target, we reduce the problem to traditional Stackelberg Security Games with fixed payoff structure, which again admits an efficient algorithm.

\textbf{Numerical evaluation}: We provide extensive experimental evaluation for the proposed algorithms. For problems with $L^1$-norm form budget constraint, we show that the branch-and-bound approach with an additive approximation guarantee can solve up to hundreds of targets in a few minutes. This is faster than other baseline algorithms we compare to. Somewhat surprisingly, naively solving $n$ non-convex subproblems using interior point method achieves good performance in practice
despite that
there is no theoretical guarantee of solution quality. We also evaluate the proposed $O(n^3)$ algorithm for the $L^0$-norm form case and show its superior performance over two greedy algorithms and a MILP based algorithm.





\section{Preliminaries and Related Work}
The security game that we consider in this paper features a set of $n$ targets, $T = \{1,2,\dots, n\}$. The defender has $r$
units of defensive resources, each can protect one target. The attacker can choose to attack one target after observing the defender's strategy. 
If the defender covers target $i$ when it is attacked, the defender gets a reward $R^d_i$ $\geq 0$ and the attacker gets a penalty $P^a_i$ $\leq 0$. Otherwise, the defender gets a penalty $P^d_i$ $\leq 0$ and the attacker gets a reward $R^a_i$ $\geq 0$. When the defender commits to a mixed strategy $c$, that is, covering target $i$ with probability $c_i$, the defender's and attacker's expected utilities when target $i$ is attacked are $U^d_i = c_i R^d_i + (1-c_i) P^d_i$ and $U^a_i = c_i P^a_i + (1-c_i) R^a_i$, respectively.

We adopt the commonly used solution concept of Strong Stackelberg Equilibrium (SSE) \cite{Kiekintveld:2009:AAMAS:origami}. At an SSE, the defender chooses an optimal strategy that leads to the highest expected utility for her when the attacker chooses a best response (assumed to be a pure strategy w.l.o.g), breaking ties in favor of the defender. Given a coverage $c$, the attack set $\Gamma \subseteq T$ contains all the targets which have a weakly higher attacker's expected utility than any other target, i.e.,
\begin{equation} \label{equ:attackset}
\Gamma = \{j \in T: U^a_j \geq U^a_k, \forall k \in T\}
\end{equation} 
\cite{Kiekintveld:2009:AAMAS:origami} show that there exists an SSE where the defender only covers the targets in the attack set. 

Given the game parameters, the optimal defender strategy in such a game can be computed using multiple linear programs (LPs)~\cite{Conitzer:2006:EC:multipleLP} or an efficient $O(n^2)$ algorithm called ORIGAMI \cite{Kiekintveld:2009:AAMAS:origami} based on enumerating the possible attack sets. We leverage insights from both works to devise our algorithms. 

Although many algorithms have been developed for security games under various settings, in most of the existing literature, the payoff structure is treated as fixed and cannot be changed by the defender, either in the full information case
~\cite{Korzhyk:2010:AAAI:multipleresources,Paruchuri:2008:AAMAS:bsg,Laszka:2017:AAAIAICS:alert}, or in the presence of payoff uncertainties~\cite{Kiekintveld:2013:AAMAS:interval,Kiekintveld:2011:AAMAS:bayesian,Yin:2012:AAMAS:bayesian,Letchford:2009:SAGT:learning,Blum:2014:NIPS:learning}. 
%
%
%
As mentioned earlier, in many real-world scenarios the defender has control over the attacker's payoffs.
The approaches above ignore this aspect and thus leave room for further optimization. 

Indeed, despite its significance, jointly optimizing the payoff structure and the resource allocation is yet under-explored. 
A notable exception, and a most directly related work to ours, is the audit game model~\cite{Blocki:2013:IJCAI:audit1,Blocki:2015:AAAI:audit2}. The defender can choose target-specific ``punishment rates'', in order to maximize her expected utility offset by the cost of setting the punishment rate. 
Compared with their model, ours is more general in that we allow not only manipulation of attacker's penalty, but also attacker's reward. This realistic extension makes their core techniques inapplicable. Also, we treat the manipulation cost as a constraint instead of a regularization term in the objective function, for in some real-world settings, payoffs can be manipulated only once, yet the defender may face multiple attacks afterwards. This makes it hard to determine the regularization coefficient.
Another closely related work~\cite{Schlenker:2018:AAMAS:cyber} 
focuses
on the use of honeypot~\cite{Kiekintveld:2015:CyberWarfare:honeypot,Durkota:2015:IJCAI:netsec,Pibil:2012:GameSec:honeypot}.
It studies the problem of deceiving a cyber attacker by manipulating the attacker's (believed) payoff. However, it assumes the defender can only change the payoff structure, ignoring the allocation of defensive resources after the manipulation. ~\cite{Horak:2017:GameSec:belief} study the manipulation of attacker's belief in a repeated game. They assume actively engaging attacker and defender, which is not the case in our problem.

If we conceptually decouple the payoff manipulation from resource allocation, the defender faces a two-stage decision. She first chooses the structure of the game, and then plays the game. Thus, our problem may be viewed as a mechanism design problem, albeit not in a conventional setting. Most work in mechanism design considers private information games~\cite{Fujishima:1999:IJCAI:auction,Myerson:1989:TNP:mechanism}, while in our work, and in most security game literature, the payoff information is public. Some design the incentive mechanism using a Stackelberg game~\cite{Kang:2015:MC:p2pmechanism}, with applications to network routing~\cite{Sharma:2007:EC:routingmechanism}, mobile phone sensing~\cite{Yang:2012:Mobicom:mobilephone}, and ecology surveillance~\cite{Xue:2016:CP:ecology}. However, these works solve the Stackelberg game to design the mechanism, rather than designing the structure of the Stackelberg game.

\section{Optimizing Payoff with Budget Constraint in Weighted $L^1$-norm Form} 
\label{sect:l1}
In this section, we focus on computing the optimal way of manipulating attacker's payoffs and allocating defensive resources when the defender can change the attacker's reward and penalty at a cost that grows linearly in the amount of change and the defender has a limited budget for making the changes. The cost rate, referred to as weights, may be different across targets. This is an abstraction of several domains. For example, a network administrator may change the actual or appeared value of any network node although such change often incurs time and hardware costs. 



Let $R^a$, $P^a$, $\bar{R}^a$, $\bar{P}^a$ denote the attacker's reward and penalty vectors before and after the manipulation. Similar to the initial payoff structure, we require that $\bar{R}^a\geq 0 \geq \bar{P}^a$ and denote $D_j = R^a_j - P^a_j$. Let $\epsilon=\bar{R}^a-R^a$ and $\delta=\bar{P}^a-P^a$ be the amount of change in attacker's reward and penalty and $\mu, \theta$ the weights on $\epsilon, \delta$ resp.. The budget constraint is in weighted $L^1$ norm form, i.e., $\sum_j(\mu_j|\epsilon_j|+\theta_j|\delta_j|)\leq B$ where $B$ is the budget. The defender's strategy is characterized by $(c, \epsilon,\delta)$. Given this strategy, in the manipulated game the attacker attacks some target $t$, which belongs to the attack set $\Gamma$.
We first show some properties of the optimal solution.
\begin{restatable}{thm}{optproperty}
\label{property:l1opt}
    There is an optimal solution $(c,\epsilon,\delta)$ with corresponding attack target $t$ and attack set $\Gamma$ which satisfies the following conditions: 
    \begin{enumerate}
        \item $c_j = 0, \epsilon_j = 0, \delta_j = 0, \forall j \notin \Gamma$.
        \item $\epsilon_t \geq 0, \delta_t \geq 0$; $\epsilon_j \leq 0, \delta_j \leq 0$, $\forall j\neq t$.
        \item $\delta_t \epsilon_t (\delta_t+P^a_t)=0$ and  $\delta_j \epsilon_j (R^a_j+\epsilon_j)=0$, $\forall j\neq t$.
    \end{enumerate}
\end{restatable}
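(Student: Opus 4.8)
The plan is to start from an arbitrary optimal solution $(c,\epsilon,\delta)$ with realized attack target $t$ and transform it, through a sequence of local modifications that never decrease the defender's utility nor increase the spent budget, into an optimal solution for which all three conditions hold; I then let $\Gamma$ be the attack set of the \emph{final} solution. Two facts drive every step. First, by the SSE structure result of \cite{Kiekintveld:2009:AAMAS:origami}, once the manipulated payoffs are fixed we may assume the defender covers only the attack set, so $c_j=0$ for $j\notin\Gamma$. Second, since ties are broken in the defender's favor, the realized $t$ maximizes $U^d$ over $\Gamma$, so every other attack-set target has $U^d_j\le U^d_t$; consequently, pushing such a $j$ out of $\Gamma$, or raising $U^a_t$, can never lower the defender's utility. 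I also repeatedly use that $U^a_k=c_k(P^a_k+\delta_k)+(1-c_k)(R^a_k+\epsilon_k)$ is nondecreasing in both $\epsilon_k$ and $\delta_k$.

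Conditions 1 and 2 then follow from monotonicity together with a ``no wasted budget'' argument. For $j\notin\Gamma$ (so $c_j=0$) the penalty change $\delta_j$ does not enter $U^a_j=R^a_j+\epsilon_j$, so I set $\delta_j=0$ for free, and I move $\epsilon_j$ toward $0$, stopping either at $\epsilon_j=0$ (Condition 1 holds for $j$) or when $U^a_j$ first reaches $U^a_t$, in which case $j$ joins $\Gamma$ with $\epsilon_j\le0$ and is thereafter governed by the other conditions; budget is released and, since any target reaching the attack-set threshold has $U^d\le U^d_t$ by optimality, the defender's utility is preserved. For Condition 2, if $\epsilon_t<0$ or $\delta_t<0$ I raise it to $0$, which weakly increases $U^a_t$ so $t$ remains attacked while budget is saved; symmetrically, for $j\ne t$ in $\Gamma$ a positive $\epsilon_j$ or $\delta_j$ only makes a non-attacked target more attractive, so I lower it to $0$, which decreases $U^a_j$, keeps $t$ attacked, and saves budget. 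Each step leaves $c$ and hence $U^d_t$ unchanged, so optimality is preserved.

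Condition 3 is a complementary-slackness statement that I would obtain by reallocating cost along a level set of $U^a$. For the attacked target with $0<c_t<1$, I perturb $(\epsilon_t,\delta_t)$ subject to $(1-c_t)\Delta\epsilon_t+c_t\Delta\delta_t=0$, which keeps $U^a_t$, the attack set, and $U^d_t$ fixed; the first-order change in cost is proportional to $\mu_t-\theta_t(1-c_t)/c_t$, so I shift weight to the cheaper channel (or, if this is zero, shift freely) until one of $\epsilon_t=0$, $\delta_t=0$, or $\bar P^a_t=\delta_t+P^a_t=0$ is hit, giving $\delta_t\epsilon_t(\delta_t+P^a_t)=0$; the degenerate cases $c_t\in\{0,1\}$ are immediate since then one channel does not affect $U^a_t$ and is zeroed. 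The same reallocation for $j\ne t$, where both $\epsilon_j,\delta_j\le0$ and the active boundary is the reward nonnegativity $\bar R^a_j=R^a_j+\epsilon_j=0$ (the penalty has no lower bound), yields $\delta_j\epsilon_j(R^a_j+\epsilon_j)=0$.

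The main obstacle is bookkeeping the global effect of each local change: every modification can reshape the attack set and re-trigger the defender-favorable tie-break, so I must verify at each step that the realized defender utility does not drop and that the modifications compose — without cycling — into a single final solution satisfying all three conditions simultaneously. The two driving facts are exactly what make each step safe, since increasing $U^a_t$ or decreasing $U^a_j$ for $j\ne t$ can only help; turning this into a clean, order-independent transformation is the part that needs the most care.
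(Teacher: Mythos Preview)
Your proposal is correct and follows essentially the same approach as the paper: start from an arbitrary optimal solution and transform it through local, budget-nonincreasing modifications that preserve the defender's utility, establishing the three conditions in turn and using the level-set reallocation between the reward and penalty channels for Condition~3. The only notable difference is that for Condition~2 the paper flips the sign of a wrong-signed $\epsilon_t$ (or $\delta_t$) rather than truncating it to zero, then re-runs ORIGAMI and re-establishes Condition~1; both variants work, and your explicit cost derivative $\mu_t-\theta_t(1-c_t)/c_t$ for Condition~3 is in fact a bit more careful about the weighted case than the paper's appendix proof.
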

\begin{proof}[Proof sketch]
Condition 1: If any $\epsilon_j, \delta_j \neq 0$ with $j \notin \Gamma$, we may either set $\epsilon_j, \delta_j = 0$ or push $j$ into the attack set. There is no need to protect a target that is not in the attack set. Condition 2: Flipping the sign of $\epsilon$ and $\delta$ leads to no better solution.
Condition 3: 
For each target $i$, we can shift the change on $R^a_i$ to $P^a_i$ and vice versa while one is more budget efficient than the other depending on coverage.
We can show that these manipulations can be done simultaneously.\footnote{Due to limited space, the full proofs are included in the full version of the paper: https://arxiv.org/abs/1805.01987}
\end{proof}

Similar to the multiple LPs formulation in \cite{Conitzer:2006:EC:multipleLP}, we consider $n$ subproblems $\mathcal P_i$, each assuming some target $i \in T$ is the attack target, and the best solution among all $n$ subproblems is the optimal defender strategy. Condition 2 in Property \ref{property:l1opt} shows it is possible to infer the sign of $\epsilon$ and $\delta$ given the attack target. So in the sequel, we abuse the notation by treating $\epsilon, \delta$ as the absolute value of the amount of change, and assume w.l.o.g. that in $\mathcal P_i$, $\bar{R}^a_i = R^a_i + \epsilon_i, \bar{P}^a_i = P^a_i + \delta_i$ and $\forall j\neq i, \bar{R}^a_j = R^a_j - \epsilon_j, \bar{P}^a_j = P^a_j - \delta_j$. Thus, a straightforward formulation for $\mathcal P_i$ is
\begin{align}\label{program:l1cont}
\max\limits_{c,\epsilon,\delta} \quad & U^d_i = R^d_ic_i+P^d_i(1-c_i) = c_i D_i + P_i^d \\
\mbox{s.t.}\quad & 
U^a_i = c_i (P^a_i+\delta_i) + (1-c_i) (R^a_i+\epsilon_i) \label{constr:attack-target}\\
& \nonumber \geq U^a_j = c_j (P^a_j-\delta_j) + (1-c_j) (R^a_j-\epsilon_j), \forall j \neq i\\
& \sum_j(\mu_j\epsilon_j+\theta_j\delta_j)\leq B \label{constr:budget}\\
& \sum_jc_j \leq r \\
&R_j^a-\epsilon_j \geq 0,\quad \forall j\neq i\\
&P_i^a+\delta_i \leq 0\\
& c_j, \epsilon_j, \delta_j \geq 0,\quad c_j \leq 1, \quad \forall j \in T\label{constr:nonnegative}
\end{align}
The above formulation is non-convex due to the quadratic terms in Constraint \ref{constr:attack-target} which leads to an indefinite Hessian matrix (see Appendix~\ref{pf:l1_non_convex}), and thus no existing solvers can guarantee global optimality for the above formulation.


\subsection{A MILP-based Solution with Approximation Guarantee}
\label{subsect:l1atomic}

To find a defender strategy with solution quality guarantee, we solve the atomic version of the subproblems with MILPs. We show an approximation guarantee which improves as the fineness of discretization grows. We further propose a branch-and-bound-like framework for pruning subproblems to improve runtime efficiency. 

In the atomic version of the payoff manipulation problem, we assume the defender can only make atomic changes, with the minimum amount of change given as $\rho_0$.
We refer to the atomic version of $\mathcal{P}_i$ as $\mathcal{AP}_i$. 
$\mathcal{AP}_i$ can be formulated as the MILP in Equations~\ref{eqn:l1milpbegin}-\ref{eqn:APi-linearize2}. We simplify the objective function as $c_i$ since $D_i\geq 0$. All constraints involving sub/super-script $j, k$ without a summation apply to all proper range of summation indices. We use binary representation for $\bar R^a_i/\rho_0$ and $\bar P^a_i/\rho_0$ in constraints~\ref{eqn:APi-binary1}-\ref{eqn:APi-binary5}. 
The binary representation results in bilinear terms like $y^k_jc_j$. We introduce variables $\alpha^k_j, \beta^k_j$ and constraints~\ref{eqn:APi-linearize1}-\ref{eqn:APi-linearize2} to linearize them.
\begin{align}
\max\limits_{y_j^k, z_j^k, c_j} \quad & c_i \label{eqn:l1milpbegin}\\
s.t. \quad \nonumber &\text{Constraint }\ref{constr:budget}\text{-}\ref{constr:nonnegative}\\
&\epsilon_i = \rho_0\sum_k 2^k y^k_i-R^a_i\label{eqn:APi-binary1}\\
&\epsilon_j = R^a_j - \rho_0\sum_k 2^k y^k_j, \quad \forall j\neq i\\
&\delta_i=-\rho_0\sum_k 2^k z^k_i-P^a_i\\
&\delta_j=P^a_j + \rho_0\sum_k 2^k z^k_j,\quad \forall j\neq i\label{eqn:APi-binary4}\\
&y^k_j, z^k_j \in \{0,1\} \label{eqn:APi-binary5}\\
&v_i\geq v_j \\
&v_i = R^a_i + \epsilon_i - \rho_0\sum_k 2^k(\alpha^k_i + \beta^k_i )\\ 
&v_j = R^a_j - \epsilon_j - \rho_0\sum_k 2^k(\alpha^k_j + \beta^k_j ),\forall j\neq i\\
&0\leq \alpha^k_j\leq y^k_j,\quad c_j -(1-y^k_j)\leq \alpha^k_j\leq c_j \label{eqn:APi-linearize1}\\
&0\leq \beta^k_j\leq z^k_j,\quad c_j-(1-z^k_j)\leq \beta^k_j\leq c_j\label{eqn:APi-linearize2}
\end{align}

The optimal defender strategy for the atomic payoff manipulation problem can be found by checking the solution to all the subproblems and compare the corresponding $U_i^d$. We can also combine all the subproblems by constructing a single MILP, with additional variables indicating which subproblem is optimal. The details can be found in the full version.

A natural idea to approximate the global optima of the original $L^1$-constrained payoff manipulation problem is, for each attack target $i$, approximate $\mathcal P_i$ with $\mathcal{AP}_i$ using small enough $\rho_0$. Theorem~\ref{thm:l1milpapprox} below shows such an approximation bound.
\begin{restatable}{thm}{milpapprox}
\label{thm:l1milpapprox}
The solution of the atomic problem is an additive 
$\max_i \frac{2\rho_0(R^d_i - P^d_i)}{R^a_i}$-approximation to the original problem.
\end{restatable}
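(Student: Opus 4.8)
The plan is to exploit the nesting of the two feasible regions together with a rounding-and-repair argument. Since every atomic payoff profile is also feasible for the continuous subproblem $\mathcal P_i$, the atomic optimum can never exceed the continuous optimum, so only the reverse direction needs work. I would take an optimal continuous solution $(c^*,\epsilon^*,\delta^*)$ with attack target $i$, coverage $c^*_i$, and manipulated payoffs $\bar R^{a*},\bar P^{a*}$, and build from it a \emph{feasible atomic} solution whose defender utility is within the claimed additive amount; this lower-bounds the atomic optimum and closes the gap.

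The construction rounds every manipulated payoff to the grid $\{0,\pm\rho_0,\pm 2\rho_0,\dots\}$ in the direction of \emph{less} manipulation: for the attack target, round both $\bar R^a_i$ and $\bar P^a_i$ toward $R^a_i$ and $P^a_i$ (shrinking $\epsilon_i,\delta_i$); for each $j\neq i$, round both $\bar R^a_j$ and $\bar P^a_j$ toward $R^a_j$ and $P^a_j$ (shrinking $\epsilon_j,\delta_j$). In the generic case, because all manipulation amounts only shrink, the budget constraint \ref{constr:budget}, the resource constraint, and the sign constraints \ref{constr:nonnegative} are all preserved, so feasibility of everything except the attack-set constraint is immediate. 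Each payoff moves by at most $\rho_0$, so $U^a_i$ drops by at most $\rho_0$ while each competitor's $U^a_j$ rises by at most $\rho_0$; hence the slack $U^a_i-\max_{j\neq i}U^a_j$ degrades by at most $2\rho_0$. Keeping $\bar R^a_i\ge R^a_i$ and $\bar P^a_i\le 0$ guarantees $\bar D_i=\bar R^a_i-\bar P^a_i\ge R^a_i$, which is exactly the quantity that will appear in the denominator.

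To restore the attack-set constraint I would lower the attack target's coverage, which is the ``free'' repair operation: it consumes no budget and only relaxes $\sum_j c_j\le r$. Since $U^a_i=\bar R^a_i-c_i\bar D_i$ decreases in $c_i$ with slope $-\bar D_i$, reducing $c_i$ by $2\rho_0/\bar D_i$ raises $U^a_i$ by $2\rho_0$, enough to undo the degradation and make $i$ a weakly best response again. Using $\bar D_i\ge R^a_i$, the required coverage drop is at most $2\rho_0/R^a_i$, and since the objective is $c_iD_i^d+P^d_i$ with $D_i^d=R^d_i-P^d_i\ge 0$, the defender utility falls by at most $\frac{2\rho_0(R^d_i-P^d_i)}{R^a_i}$. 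Taking the worst case over which target is the continuous optimum's attack target yields the stated $\max_i\frac{2\rho_0(R^d_i-P^d_i)}{R^a_i}$ bound. (If the drop would push $c_i$ below $0$, then $c^*_i\le 2\rho_0/R^a_i$ already and setting $c_i=0$ loses no more than the same amount.)

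The main obstacle is the feasibility bookkeeping, not the utility accounting. The delicate case is when a continuous payoff lies strictly between its original value and the adjacent grid point, so that the only grid point respecting the sign constraint ($\epsilon\ge 0$, $\delta\ge 0$, $\bar R^a\ge 0\ge\bar P^a$) is reached by rounding \emph{away} from the original, nominally spending up to $\rho_0$ of extra budget on that target and making the two effects on budget and on the attack-set slack pull in opposite directions. I would discharge this either by reducing the claim to the intrinsically atomic setting in which the original payoffs are themselves multiples of $\rho_0$ (so the grid contains the originals and no boundary case arises), or by arguing that such boundary roundings can be absorbed without net budget increase. Pinning down this simultaneous rounding so that the budget, resource, sign, and attack-set constraints all hold at once is where the real care is needed.
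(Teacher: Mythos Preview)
Your proposal is correct and follows essentially the same route as the paper: take an optimal continuous solution for the best subproblem $\mathcal P_i$, floor all manipulation magnitudes to the $\rho_0$-grid (so budget, resource, and sign constraints are preserved), observe that $U^a_i$ drops and each $U^a_j$ rises by at most $\rho_0$, and then repair the attack-set constraint by decreasing $c_i$ by $2\rho_0/\bar D_i\le 2\rho_0/R^a_i$, which costs at most $2\rho_0(R^d_i-P^d_i)/R^a_i$ in defender utility. The paper does exactly this (it writes the repair as $c_i'=c_i^*-2\rho_0/(D_i+\epsilon_i'-\delta_i')$, which is your $\bar D_i$), without dwelling on the boundary cases you flag; your extra care about $c_i$ going negative and about whether the original payoffs lie on the $\rho_0$-grid is sound but not needed for the paper's level of rigor.
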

\begin{proof}[Proof sketch]
The floor and ceiling notations are about the ``integral grid'' defined by $\rho_0$. Suppose $(c^*, \epsilon^*,\delta^*)$ is an optimal solution to $\mathcal P_i$. Let $\epsilon' = \lfloor \epsilon^* \rfloor$, $\delta' = \lfloor \delta^* \rfloor$, and $c' = c^*$ except $c_i' = c_i^* - 2\rho_0/(D_i + \epsilon_i' - \delta_i')$. We can show such feasible solutions yield the desired approximation bound.
\end{proof}

We note that the idea of discretizing the manipulation space is similar to~\cite{Blocki:2013:IJCAI:audit1,Blocki:2015:AAAI:audit2}. Yet allowing changes in both reward and penalty and the difference in objective function make our formulation different and the reduction to SOCP used in ~\cite{Blocki:2015:AAAI:audit2} inapplicable.

We can further improve the practical runtime of the MILPs by pruning and prioritizing subproblems as shown in Alg.~\ref{alg:bnb}. 
We first compute a global lower bound by checking a sequence of greedy manipulations. Inspired by Condition 2 and 3 in Property \ref{property:l1opt}, we greedily spend all the budget on one target to increase its reward or penalty, leaving all other targets' payoff parameters unchanged (Lines \ref{ln:GM_start} - \ref{ln:GM_end}).

Upper bounds in $\mathcal{P}_i$ can be computed with budget reuse: we independently spend the full amount of budget $B$ on each target to increase $R^a_i$ and $P^a_i$ and decrease $R^a_j$ and $P^a_j$, $j \neq i$, as much as possible.
%
For the ease of notation, in Alg.~\ref{algo:l1-pruning} we assume manipulations have uniform cost. The weighted case can be easily extended.

The subproblem $\mathcal{P}_i$ is pruned if its upper bound is lower than the global lower bound. To make the pruning more efficient, we solve subproblems in descending order of their corresponding lower bounds, hoping for an increase in the global lower bound. For subproblems that cannot be pruned, we set $\rho_0$ to the desired accuracy and solve the MILP to approximate the subproblem optima. We also add to the MILP the linear constraint on $c_i$ derived from the global lower bound.

To get the bounds, we call an improved version of the ORIGAMI algorithm in~\cite{Kiekintveld:2009:AAMAS:origami} by doing a binary search on the size of the attack set $\Gamma$, and solve the linear system. It is denoted as ORIGAMI-BS in Alg.~\ref{alg:bnb}. Recall $r$ is the defender's total resource.
Let $M$ be the attacker's expected utility for attack set and $\bar E_k = \frac{1}{\bar R^a_i - \bar P^a_i}$. From $U^a_i = U^a_j,\forall j\in\Gamma$ and $\sum_{j \in \Gamma} c_j = r$, we obtain
\begin{align} 
M &= \frac{\sum_{k \in \Gamma} \bar{R}^a_k \bar E_k - r}{\sum_{k \in \Gamma}\bar E_k} & \label{eqn:M}\\
c_j &= \bar E_j \left( \frac{\sum_{k \in \Gamma} (\bar R^a_j - \bar R^a_k) \bar E_k + r}{\sum_{k \in \Gamma} \bar E_k}\right),  & \quad \forall j\in \Gamma \label{eqn:cj}
\end{align}
We iteratively cut the search space by half based on $c_j$ and $M$. The complexity improves from $O(n^2)$ to $O(n\log n)$. A complete description can be found in the full version.

\setlength{\textfloatsep}{5pt}
\begin{algorithm}[t]
\scriptsize
    \caption{Branch-and-bound}    \label{alg:bnb}     
    \begin{algorithmic}[1] \label{algo:l1-pruning}
        \REQUIRE Payoffs $\sigma = \{R^d, P^d, R^a, P^a\}$, budget $B$
        \STATE Initialize $LB\leftarrow \emptyset, globalLB \leftarrow -\infty, N \leftarrow \emptyset$ containing set of indices of pruned subproblems. Set $\rho_0$ to be a desired accuracy.
        \FOR{Subproblem $\mathcal P_i$}\label{ln:GM_start}
        \STATE Greedy Modifications (GM):
        \STATE $\mbox{GM}_1 \leftarrow \bar{R}^a_i=R^a_i+B$
        \STATE $\mbox{GM}_2 \leftarrow \bar{P}^a_i=\min\{P^a_i+B,0\}, \bar{R}^a_i = \max\{R^a_i, R^a_i+B+P^a_i\}$
        \STATE $LB_i \leftarrow \max_{j\in\{1,2\}}\mbox{ORIGAMI-BS}(GM_j)$
        \ENDFOR
        \STATE $globalLB \leftarrow \max_{i\in [n]}LB_i$\label{ln:GM_end}
        \STATE Sort $\mathcal P_i$ in decreasing $LB_i$.
        \FOR{sorted $\mathcal P_i$}
        \STATE Overuse Modifications (OM): $\forall j\neq i, \bar{R}^a_j = \max\{0, R^a_j-B\}, \bar{P}^a_j = \min\{P^a_j, P^a_j-B+R^a_j\}$
        \STATE $\mbox{OM}_1 \leftarrow \bar{R}^a_i=R^a_i+B.$
        \STATE $\mbox{OM}_2 \leftarrow \bar{P}^a_i=\min\{P^a_i+B,0\}, \bar{R}^a_i = \max\{R^a_i, R^a_i+B+P^a_i\}$ 
        \STATE $UB_i \leftarrow \min_{j\in\{1,2\}}\mbox{ORIGAMI-BS}(\mbox{OM}_j)$
        \IF{$UB_i \leq globalLB$}
        \STATE Prune $\mathcal P_i$
        \ELSE
        \STATE run MILP of $\mathcal{P}_i$ with additional constraint $c_i\leq \frac{globalLB-P^d_i}{R^d_i - P^d_i}$ 
        \ENDIF
        \ENDFOR
        \RETURN Best solution among $globalLB$ and all $\mathcal P_i$'s.
    \end{algorithmic} 
\end{algorithm}

We end this subsection by remarking that atomic payoff manipulation arises in many real-world applications. For example, it is infeasible for the wildlife ranger to charge the poacher a fine of \$$100/3$. In those cases, our proposed MILP formulation could be directly applied.

\subsection{PTAS for Limited Budget and Uniform Costs}
\label{subsect:l1ptas}
We show that for a special but practical class of problems, there exist a PTAS. In many applications, the defender has only a limited budget 
$B \leq \min_{j\in T} \{\left|P^a_j\right|, R^a_j\}$. Additionally, the weights on $\epsilon$ and $\delta$ are the same. W.l.o.g., we assume $\mu_j = \theta_j = 1$. We first show a structural theorem below and then discuss its algorithmic implication.

\begin{restatable}{thm}{twotarget}
\label{thm:l1-2target}
    When budget 
    $B \leq \min_{j\in T} \{\left|P^a_j\right|, R^a_j\}$
    and $\mu_j = \theta_j = 1, \forall j\in T$, there exists an optimal solution which manipulates the attack target and at most one other target. 
\end{restatable}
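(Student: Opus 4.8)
The plan is to work inside a single subproblem $\mathcal P_t$, assuming target $t$ is attacked, and to start from an optimal solution already in the normal form guaranteed by Property~\ref{property:l1opt}. Two consequences of that structure drive the argument. First, by Condition~1 only targets in the attack set $\Gamma$ carry any manipulation, and by Condition~2 their signs are fixed ($t$'s payoff is raised while every other target's is lowered). Second, Condition~3 together with the hypothesis $B\le\min_{j}\{|P^a_j|,R^a_j\}$ forces each target to be manipulated in at most one coordinate: for $j\ne t$ the factor $R^a_j-\epsilon_j$ can vanish only if the whole budget is spent lowering $R^a_j$ (since $\epsilon_j\le B\le R^a_j$), and symmetrically $\delta_t+P^a_t$ can vanish only by spending the whole budget on $P^a_t$; both boundary cases manipulate a single target and already satisfy the claim, so outside them each target uses exactly one of reward- or penalty-manipulation.

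First I would fix the manipulation of the attack target $t$, which pins down $\bar R^a_t$ and $\bar D_t=\bar R^a_t-\bar P^a_t$; together with a target value of $c_t$ this fixes the attacker's equilibrium payoff $M=\bar R^a_t-c_t\bar D_t$. Relative to this threshold, every other target $j$ must be covered by at least $c_j^{\min}(b_j)=\max\{0,(\bar R^a_j-M)/\bar D_j\}$, where $\bar D_j=\bar R^a_j-\bar P^a_j$ and $b_j$ is the budget assigned to $j$. Since the objective $U^d_t=c_tD_t+P^d_t$ is increasing in $c_t$, maximizing it is equivalent to asking, for each choice of $t$'s manipulation and each $c_t$, whether the remaining budget $B'$ can achieve $\sum_{j\ne t}c_j^{\min}(b_j)\le r-c_t$ — the attack target can be covered more heavily exactly when the other targets consume fewer resources. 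Hence it suffices to prove that the minimizer of $\sum_{j\ne t}c_j^{\min}(b_j)$ over the polytope $\{b\ge 0,\ \sum_{j\ne t}b_j\le B'\}$ has at most one nonzero entry.

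The engine is a convexity computation. For a reward-manipulated target, $\bar R^a_j=R^a_j-b_j$ and $\bar D_j=D_j-b_j$, so $c_j^{\min}(b_j)=1+(P^a_j-M)/(D_j-b_j)$; differentiating twice gives $2(P^a_j-M)/(D_j-b_j)^3<0$, because a covered target has $M>\bar P^a_j=P^a_j$. Thus each such $c_j^{\min}$ is decreasing and concave, their sum is concave, and a concave function minimized over the simplex-like polytope above attains its minimum at a vertex — either no manipulation, or the whole of $B'$ placed on a single non-attack target. Combined with $t$, this manipulates at most two targets, which is the statement.

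The hard part is that penalty manipulation breaks the concavity: with $\bar P^a_j=P^a_j-b_j$ one gets $c_j^{\min}(b_j)=(R^a_j-M)/(D_j+b_j)$, which is \emph{convex}, so for penalty-mode targets the ``minimum at a vertex'' step fails and a split could in principle be cheaper. To close this gap I would replace the blanket concavity step with a pairwise exchange: between any two manipulated non-attack targets, transfer budget toward the one with the larger marginal coverage-reduction rate $|dc_j^{\min}/db_j|$. For reward manipulation this rate is $(M-P^a_j)/(D_j-b_j)^2$, which increases as budget is added, so each transfer is self-reinforcing and runs all the way to a corner. The remaining work, and the main obstacle, is to fold penalty-mode targets into this scheme — either by showing that under $B\le\min_j\{|P^a_j|,R^a_j\}$ the reward mode is the more efficient choice for every target that is actually manipulated, or by directly re-routing a penalty-mode pair — all while verifying that each exchange preserves a valid SSE (the attack set and the ordering $U^a_j\le M$) and respects the sign and boundary constraints $\bar R^a_j\ge 0$ and $\bar P^a_j\le 0$.
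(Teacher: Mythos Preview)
Your approach differs from the paper's, and the difference is exactly why you hit the penalty-mode obstacle.

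You fix the manipulation on the attack target and then redistribute the remaining budget $B'$ among the non-attack targets, using the concavity of each $c_j^{\min}(b_j)$ to force the minimizer of $\sum_{j\ne t} c_j^{\min}$ to a vertex of the budget simplex. The concavity computation is correct for reward-manipulated targets, but for penalty-manipulated ones the function is convex and the vertex argument genuinely fails. Neither of your proposed patches goes through: the claim that reward mode is always the efficient one contradicts Condition~3 of Property~\ref{property:l1opt} (penalty mode is strictly better precisely when $c_j>\tfrac12$), and a pairwise exchange between two convex summands pushes budget toward a split, not toward a corner. So the gap you flag is real and not closed by the sketch you offer.

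The paper sidesteps the asymmetry by shifting budget in a different direction: not among the non-attack targets, but \emph{from all manipulated non-attack targets simultaneously to the attack target}. Choose the step sizes so that every attacker utility rises by the same amount $u$; then the attack set and the coverage $c$ are preserved, and only budget feasibility remains. A reward-manipulated non-attack target $j$ releases $u/(1-c_j)\ge u$ units; a penalty-manipulated one releases $u/c_j\ge u$. The attack target absorbs $u/(1-c_t)$ or $u/c_t$, and Condition~3 guarantees the relevant coefficient is at least $\tfrac12$, so it absorbs at most $2u$. With at least two manipulated non-attack targets the released budget is at least $2u$, which suffices; iterate until some $\epsilon_j$ or $\delta_j$ reaches zero. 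The point is that Condition~3 always selects the mode whose coverage coefficient is at least $\tfrac12$, so the attack target's absorption is uniformly bounded by $2u$ regardless of mode, and reward, penalty, and mixed cases all work by the same exchange --- this is the ``symmetry'' the paper invokes.
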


\begin{proof}[Proof sketch]
Since $B$ is limited, either $R_t^a$ or $P_t^a$ is unchanged according to Condition 3 of Property \ref{property:l1opt}. Assume all manipulations happen on attacker's reward. If some three targets get manipulated, we can simultaneously increase $\epsilon_{t}$ for attack target $t$ and decrease $\epsilon_j$ for $j\neq t$ such that $j$'s utility increases to be the same as target $t$, until some $\epsilon_j$ becomes $0$. After such change, the defender's utility does not decrease, and the number of targets manipulated decreases.
Other cases also hold due to symmetry.
\end{proof}

The theorem above is tight, i.e.\ we show in the full version an instance where two targets are manipulated. When 
$B \leq \min_{j\in T} \{\left|P^a_j\right|, R^a_j\}$
and $\mu_j = \theta_j = 1, \forall j\in T$, Theorem~\ref{thm:l1-2target} naturally suggests a PTAS -- we can use linear search for manipulations on all pairs of targets as shown in Alg.~\ref{algo:l1fptas}, where $e_i$ is a unit vector with a single one at position $i$. Theorem~\ref{thm:fptas-accuracy} shows the approximation guarantee, with a proof similar to Theorem~\ref{thm:l1milpapprox}, which is included in the full version.

\begin{algorithm}[t]
\scriptsize
\caption{PTAS for a special case in $L^1$}                       
    \begin{algorithmic}[1] \label{algo:l1fptas}
        \REQUIRE Payoffs $\{R^d, P^d, R^a, P^a\}$, budget $B$, tolerance $\eta$.
        \STATE Initialize $M \leftarrow -\infty$ 
        \FOR{all ordered pairs of targets $(i,j)$}
            \FOR{$s=0, 1, \ldots, \floor{B/\eta}$}
            \STATE $M \leftarrow \max\{M,\mbox{ORIGAMI-BS}(R^d, P^d, R^a+se_i-(B-s)e_j, R^d)\}$
            \STATE $M \leftarrow \max\{M,\mbox{ORIGAMI-BS}(R^d, P^d, R^a+se_i, R^d-(B-s)e_j)\}$
            \STATE $M \leftarrow \max\{M,\mbox{ORIGAMI-BS}(R^d, P^d, R^a, R^d+se_i-(B-s)e_j)\}$
            \STATE $M \leftarrow \max\{M,\mbox{ORIGAMI-BS}(R^d, P^d, R^a-(B-s)e_j, R^d+se_i)\}$
            \ENDFOR
        \ENDFOR
        \RETURN $M$
    \end{algorithmic}  
\end{algorithm}
\normalsize
\begin{restatable}{thm}{fptas}
\label{thm:fptas-accuracy}
    Alg.~\ref{algo:l1fptas} returns an additive $\max_{i\in[n]}\frac{2\eta(R^d_i-P^d_i)}{R^a_i}$ approximate solution.
\end{restatable}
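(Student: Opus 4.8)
The plan is to mirror the discretization argument already used for Theorem~\ref{thm:l1milpapprox}, since the two settings are structurally identical: in both cases we replace a continuous optimal solution by its ``floored'' counterpart on the grid of mesh $\eta$ and then quantify the loss incurred. By Theorem~\ref{thm:l1-2target}, under the stated hypotheses ($B\leq\min_{j}\{|P^a_j|,R^a_j\}$ and $\mu_j=\theta_j=1$) there is an optimal solution that manipulates only the attack target $t$ and at most one other target $j\neq t$. First I would fix such an optimal solution $(c^*,\epsilon^*,\delta^*)$ and identify which of the four manipulation patterns it follows, namely whether the budget spent on $t$ goes into $\epsilon_t$ or $\delta_t$ and whether the budget drained from $j$ comes out of $\epsilon_j$ or $\delta_j$. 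These four patterns correspond exactly to the four \mbox{ORIGAMI-BS} calls inside the inner loop of Alg.~\ref{algo:l1fptas}, so it suffices to analyze one of them; the rest follow by symmetry.

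Having fixed the pattern, the key step is to show that some grid point examined by the algorithm is close to the optimum. Let $s^*$ be the (continuous) amount of budget that the optimal solution places on target $t$, so that $B-s^*$ is spent on target $j$. The algorithm sweeps $s$ over the multiples of $\eta$ in $[0,B]$, so there is an index $s$ with $s\leq s^*< s+\eta$; I would round the optimal budget split \emph{down} to this $s$. This produces a feasible manipulated game that is, in each coordinate, within $\eta$ of the optimal manipulation. Exactly as in the proof of Theorem~\ref{thm:l1milpapprox}, I then construct a feasible coverage by keeping $c^*$ on all targets except shaving the coverage on the attack target by $2\eta/(D_i+\epsilon_i'-\delta_i')$ to restore the attack-set inequalities that the rounding may have violated. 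Because the denominator is at least $R^a_i$ (since $D_i=R^a_i-P^a_i\geq R^a_i$ and the net change $\epsilon_i'-\delta_i'$ stays nonnegative under the grid rounding), the drop in coverage on target $i$ is at most $2\eta/R^a_i$, and hence the objective $U^d_i=c_iD_i+P^d_i$ drops by at most $2\eta(R^d_i-P^d_i)/R^a_i$. Taking the worst case over $i$ gives the claimed additive bound $\max_{i\in[n]}2\eta(R^d_i-P^d_i)/R^a_i$.

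The main obstacle I expect is the bookkeeping of feasibility after rounding: I must verify that shaving the coverage as above keeps every attacker best-response inequality intact while still spending no more than budget $B$, and that the rounded reward/penalty vectors remain in the admissible region $\bar R^a\geq 0\geq\bar P^a$. The budget constraint is the delicate piece, because rounding $s^*$ down to $s$ frees up budget on target $t$ but the algorithm compensates by spending the full residual $B-s$ on target $j$; I would confirm that this reallocation is exactly what the \mbox{ORIGAMI-BS} arguments in Lines~4--7 encode, so that the grid point actually evaluated by the algorithm dominates the rounded solution I constructed. Once feasibility is pinned down, the approximation inequality is the same routine computation as in Theorem~\ref{thm:l1milpapprox}, which is why the paper defers to that proof. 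A secondary subtlety is the ordered-pair loop: since the roles of ``the target receiving budget'' and ``the target losing budget'' are asymmetric, I would note that iterating over ordered pairs $(i,j)$ guarantees every pattern from Theorem~\ref{thm:l1-2target} is captured, including the degenerate case where only the attack target is manipulated (recovered by taking $j$ arbitrary and $s=B$).
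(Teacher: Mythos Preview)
Your proposal is correct and follows essentially the same route as the paper's own proof: invoke Theorem~\ref{thm:l1-2target} to reduce to two manipulated targets, fix one of the four reward/penalty patterns, round the budget split to the nearest grid point below, and then shave the coverage on the attack target to restore feasibility. The only cosmetic difference is that the paper adjusts $c_i$ by $\eta/D_i$ (yielding a slightly tighter constant than the stated bound) whereas you import the $2\eta/(D_i+\epsilon_i'-\delta_i')$ adjustment verbatim from Theorem~\ref{thm:l1milpapprox}; both choices work, and your extra bookkeeping on budget feasibility and the ordered-pair loop is sound.
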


\section{Optimizing Payoff with Budget Constraint in Other Forms}
In this section, we explore budget constraints in other forms and show polynomial time algorithms correspondingly. 

\subsection{Weighted $L^\infty$-norm Form}\label{subsec:linfty}
Consider the case where the defender can make changes to $R^a$ and $P^a$ for every target up to the extent specified by $B^r_i$ and $B_i^p$ respectively. Following previous notations, this requirement can be represented by a budget constraint in weighted $L^\infty$-norm form, i.e., 
$\max_j\{ |\epsilon_j|/B^r_i,|\delta_j|/B_i^p\}\leq 1$.
Equivalently, the defender can choose $R^a$ and $P^a$ from a given range. A real-world setting for this problem is when a higher level of authority specifies a range of penalty for activities incurring pollution and allow the local agencies to determine the concrete level of penalty for different activities.


We observe that Condition 2 of Property 1 still holds in this setting. Therefore, such problem can be solved by simply solving $n$ subproblems. In the $i^{th}$ subproblem which assumes $i$ is the attack target, we may set reward and penalty of $i$ to be the upper bound in the given range and choose the lower bound for other targets. With our improved ORIGAMI-BS algorithm, this problem can be solved in $O(n^2 \log n)$ time.
\begin{restatable}{thm}{linfty}
\label{thm:linfty}
    With budget constraint in weighted $L^\infty$-norm, solving for defender's optimal strategy reduces to solving for defender's optimal coverage in fixed-payoff security games.
\end{restatable}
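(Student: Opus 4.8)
The plan is to establish the reduction by showing that, once we fix which target $i$ is the attack target, the optimal manipulation of payoffs is entirely determined by the box constraints, leaving only the coverage vector $c$ to be optimized. First I would invoke Condition 2 of Property \ref{property:l1opt}, which the authors note still holds in this setting: given attack target $t=i$, at optimality we have $\epsilon_i,\delta_i\geq 0$ and $\epsilon_j,\delta_j\leq 0$ for $j\neq i$. The intuition is that to make $i$ the attractive target we should raise the attacker's reward and penalty on $i$ and lower them on every other target. Under the weighted $L^\infty$-norm constraint $\max_j\{|\epsilon_j|/B^r_i,|\delta_j|/B^p_i\}\leq 1$, each coordinate is independently box-constrained, so increasing $\epsilon_i$ or $\delta_i$ is always affordable up to the bound and never consumes a shared budget that would preclude manipulating another target. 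This is the crucial structural difference from the $L^1$ case and it is what makes the per-coordinate greedy choice globally consistent.

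Second I would argue that pushing each manipulation to its extreme is weakly optimal. For the attack target $i$, setting $\bar R^a_i$ and $\bar P^a_i$ to the top of their allowed ranges can only relax the attack-set constraint (Constraint \ref{constr:attack-target}), since it raises $U^a_i$ and thus makes it easier to keep $i$ in the attack set without spending coverage on it. Symmetrically, for every $j\neq i$, lowering $\bar R^a_j$ and $\bar P^a_j$ to the bottom of their ranges decreases $U^a_j$, which likewise only relaxes the constraint that $i$ be weakly preferred. Because these choices are feasible simultaneously (the $L^\infty$ box is a product of intervals) and each monotonically helps the defender, there is an optimal solution with payoffs fixed at these extreme values. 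After this substitution, the manipulated payoff vectors $\bar R^a,\bar P^a$ are constants, and subproblem $\mathcal P_i$ collapses to: maximize $c_iD_i+P^d_i$ over coverage $c$ subject to the attack-set and resource constraints with fixed payoffs — precisely a standard fixed-payoff security game.

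The remaining step is to conclude that solving all $n$ such fixed-payoff subproblems and taking the best recovers the global optimum, which follows from the multiple-LP decomposition already used in the paper: the optimal defender strategy is the best over the $n$ choices of attack target, exactly as in \cite{Conitzer:2006:EC:multipleLP}. Invoking ORIGAMI-BS on each subproblem gives the claimed $O(n^2\log n)$ total runtime, though the theorem statement itself only asserts the reduction, so the complexity is a corollary rather than part of the proof obligation.

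I expect the main obstacle to be rigorously justifying that the extreme-point manipulation is optimal \emph{jointly} across all coordinates rather than just coordinate-by-coordinate. One must rule out that saturating the reward manipulation on target $i$ could interact adversely with the penalty manipulation, or that lowering a non-target's payoff all the way could ever hurt; the $L^\infty$ product structure is what guarantees no such coupling, but a careful exchange argument — showing that any feasible solution can be transformed into an extreme-point one without decreasing the defender's utility $U^d_i$ — is where the real content lies. A subtle point to verify is that the sign restrictions from Condition 2 are not merely assumed but hold at some optimum here, which the authors assert by analogy to the $L^1$ analysis and which I would confirm by the same flip-sign argument as in Condition 2 of Property \ref{property:l1opt}.
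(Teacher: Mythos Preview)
Your proposal is correct and follows essentially the same approach as the paper: fix the attack target, use the product structure of the $L^\infty$ box together with the monotonicity of $U^a_i$ and $U^a_j$ in the respective payoff parameters to push each payoff to its extreme, then solve the resulting fixed-payoff subproblem via the multiple-LP decomposition. The paper's proof differs only cosmetically in that it singles out $\bar R^a_i$ for an explicit algebraic case analysis (rewriting the bound on $c_i$ and splitting on the sign of the numerator), whereas your exchange argument handles all four payoff types uniformly via the same monotonicity observation.
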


\subsection{$L^0$-norm Form}\label{subsec:l0}
In some domains, the defender can make some of the targets special. For example, in wildlife protection, legislators can designate some areas as "core zones", where no human activity is allowed and much more severe punishment can be carried out. But the defender cannot set all the areas to be core zones. We model such restrictions as setting a limit on the Hamming distance between the original penalty vector and the manipulated penalty vector for the attacker, i.e., $\sum_j \mathds{1}(|\delta_j|>0)\leq B$ where $B$ is the budget. Following \cite{donoho2003optimally}, we refer to it as a $L^0$-norm form budget constraint for simplicity even though it is not technically a norm. That is, the defender needs to pay a unit cost to manipulate $P_i^a$ on target $i$ but the magnitude of change can be arbitrary. The defender needs to choose which targets to make changes. We do not consider the case where the defender can arbitrarily modify the attacker's reward $R^a$ as it is not practical and will lead to a trivial solution: the defender will place all coverage on one attack target $t=\arg\max_i R^d_i$ and set $R^a_t = \infty$. 

We assume the defender has a budget which allows her to change the penalty of $B$ targets. Similar to the $L^{\infty}$ case, we first observe that the defender will choose an extreme penalty value once he decides to change the penalty of a target.
\begin{property}\label{obs:L0Pvalue}
There exists an optimal solution where either $\bar P^a_j =-\infty$ for $B$ targets or $\bar P^a_j = -\infty$ for $(B-1)$ targets and $\bar P^a_t = 0$ for 1 target. If $\bar P^a_j = -\infty$, then $c_j = 0$.
\end{property}
\begin{proof}
When $t$ is the attack target, the defender would like to maximize $\bar P^a_t$ and minimize $\bar P^a_j$ for all $j\neq t$. If $\bar P^a_j = -\infty$ and $c_j > 0$, target $j$ will not be attacked as $U^a_j = -\infty$. In such case, target $j$ is effectively removed from the game.
\end{proof}

The defender's problem becomes non-trivial when the budget $B < T$, and we now provide a $O(n^3)$ algorithm (Alg.~\ref{algo:L0}) for solving this problem. We note that several intuitive greedy algorithms do not work, even in more restrictive game settings. A detailed comparison of our algorithm, several greedy algorithms, and a baseline MILP is provided in Section~\ref{sect:evalution}.

First, we sort the targets in decreasing attacker's reward $R^a_k$. Let $E_k = \frac{1}{R^a_k - P^a_k}$ and $\bar E_k = \frac{1}{\bar R^a_k - \bar P^a_k}$ for all $k \in T$. When $i$ is the attack target, by Property \ref{obs:L0Pvalue}, we have $\bar E_i \in \{1/R^a_i, E_i\}$ and $\bar E_j \in \{0, E_j\}$. Let $\Gamma_l = \{1,2,\dots,l\}$ for $l = 1,2,\dots,n$. 
We notice that one of the $\Gamma_l$'s, denoted as $\Gamma_{l^*}$, \emph{encapsulates} the attack set in the optimal solution to our problem. That is, in the optimal solution, each target in the attack set is in $\Gamma_{l^*}$; those targets not in the attack set either are outside $\Gamma_{l^*}$, or, if they are in $\Gamma_{l^*}$, have $\bar P^a = -\infty$. A proof can be found in the full version.
This allows us to formally define a subproblem $Q_{l, i}$: assume (i) the optimal attack set is
encapsulated by
$\Gamma_l$, (ii) the attack target is $i \in \Gamma_l$, and (iii) no target is covered with certainty, what is the defender's optimal strategy $(c, \bar E)$?
A subproblem may be infeasible. First, we show that $Q_{l,i}$ can be solved in $O(n)$ time. From Equation~\ref{eqn:cj}, for subproblem $Q_{l,i}$, we have
\begin{equation}
\frac{c_i}{\bar E_i} = \frac{\sum_{k \in \Gamma_l\backslash\{i\}} (R^a_i - R^a_k)\bar E_k + r}{\bar E_i + \sum_{k \in \Gamma_l\backslash\{i\}} \bar E_k} \label{eqn:quotient}
\end{equation} 

Let $s = \min\{B, l-2\}$ if $\bar E_i = E_i$ and $s = \min\{B-1,l-2\}$ if $\bar E_i = \frac{1}{R_i}$. Then $Q_{l, i}$ reduces to finding $s$ out of the $(l-1)$ $\bar E_k$'s to set to 0, and set the rest $\bar E_k = E_k$, so as to maximize the above quotient.
As a result, $Q_{l,i}$ is closely connected to the problem of choosing subsets with maximum weighted average, which can be solved efficiently.

\begin{proposition}~\cite{Eppstein:1997:Algorithms:weighted-average}\label{lemma:weighted-average}
    Given a set $S$ where $|S| = n$, real numbers $\{v_k: k\in S\}$, positive weights $\{w_k: k \in S\}$, and an integer $r$. Among all subsets of $S$ of order $n-r$, a subset $T \subset S$ which maximizes $A(T) = \frac{\sum_{k\in T} v_k}{\sum_{k\in T} w_k}$ can be found in $O(n)$ time.
\end{proposition}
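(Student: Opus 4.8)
The plan is to recognize this as a \emph{fractional (ratio) optimization} problem and solve it by parametric search. I introduce a scalar parameter $\lambda$ and define
\[
g(\lambda) = \max_{|T| = n-r} \sum_{k\in T} (v_k - \lambda w_k).
\]
Since every $w_k > 0$, any feasible $T$ satisfies $\sum_{k\in T} w_k > 0$, so $A(T) \ge \lambda$ if and only if $\sum_{k\in T}(v_k - \lambda w_k) \ge 0$. Hence the optimal ratio $\lambda^\star = \max_{|T|=n-r} A(T)$ is exactly the unique root of $g$: we have $g(\lambda) > 0$ for $\lambda < \lambda^\star$ and $g(\lambda) < 0$ for $\lambda > \lambda^\star$. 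This holds because $g$ is the upper envelope of the affine functions $\lambda \mapsto \sum_{k\in T}(v_k - \lambda w_k)$, each with strictly negative slope $-\sum_{k\in T} w_k$; thus $g$ is convex, continuous, and strictly decreasing, so it has a single root, and the subset attaining $g(\lambda^\star)=0$ is the maximizer we seek.

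Next I would build an $O(n)$ \emph{decision oracle}. For a fixed $\lambda$, evaluating $g(\lambda)$ only requires selecting the $n-r$ elements with the largest key $v_k - \lambda w_k$ and summing those keys; using linear-time selection (e.g.\ median-of-medians to find the $(n-r)$-th largest key and partition around it) this costs $O(n)$ and simultaneously yields the sign of $g(\lambda)$, which tells us whether $\lambda$ lies below, at, or above $\lambda^\star$. Were we content with $O(n\log n)$, we could now binary-search over the $O(n^2)$ candidate breakpoints $\lambda_{jk} = (v_j - v_k)/(w_j - w_k)$ --- the abscissae where two lines cross and the membership of the top-$(n-r)$ set can change --- calling the oracle $O(\log n)$ times.

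The main obstacle, and the reason the bound is $O(n)$ rather than $O(n\log n)$, is eliminating the $\Theta(n^2)$ breakpoints and the $O(\log n)$ oracle calls simultaneously. The standard remedy is a \emph{prune-and-search} in the spirit of Megiddo's parametric search. I would maintain a bracket $[\alpha,\beta]$ known to contain $\lambda^\star$, and in each round pair up the still-active lines to produce $O(n)$ candidate crossing values lying inside the bracket. Taking their median $\mu$ by linear-time selection and invoking the oracle once at $\mu$ shrinks the bracket to one side, thereby fixing, for a constant fraction of the pairs, which of the two lines stays above the other throughout the remaining bracket; those comparisons need never be revisited. The delicate point is the bookkeeping that converts ``resolved comparisons'' into elements that can be permanently declared in or out of the optimal top-$(n-r)$ set; handled correctly (as in Eppstein's algorithm), each round removes a constant fraction of the active elements while spending work linear in the current count, so the running time telescopes as $O(n) + O(n/2) + \cdots = O(n)$. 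Once the bracket contains no active crossing, the top-$(n-r)$ membership is determined, and emitting that set yields the optimal $T$, with correctness following from the root characterization of $\lambda^\star$ established above.
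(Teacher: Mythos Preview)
The paper does not prove this proposition at all: it is quoted verbatim as a known result from \cite{Eppstein:1997:Algorithms:weighted-average} and then used as a black box inside the proof of the subsequent lemma. So there is no ``paper's own proof'' to compare against.

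That said, your sketch is sound and is essentially the argument of the cited reference. The Dinkelbach reduction to the root of $g(\lambda)=\max_{|T|=n-r}\sum_{k\in T}(v_k-\lambda w_k)$ is correct, including the monotonicity claim (an upper envelope of affine functions with strictly negative slopes is strictly decreasing). The $O(n)$ decision oracle via linear-time selection is also correct. The only part you leave at the level of a plan is the prune-and-search that discards a constant fraction of elements per round; you rightly flag the bookkeeping that turns ``resolved pairwise comparisons'' into permanent in/out decisions for the top-$(n-r)$ set as the nontrivial step, and defer to Eppstein for it. If you wanted this to stand on its own, that step would need to be spelled out (e.g., maintaining certificates that an element's line lies above or below the $(n-r)$-th line throughout the current bracket), but as a summary of the cited algorithm your proposal is accurate.
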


\begin{lemma}
    The subproblem $Q_{l,i}$ can be solved in $O(n)$ time.
\end{lemma}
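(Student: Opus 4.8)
The plan is to show that, once we fix the value of $\bar E_i$, the subproblem $Q_{l,i}$ collapses to a single instance of the maximum-weighted-average subset problem of Proposition~\ref{lemma:weighted-average}. Since the objective is the defender's utility $U^d_i = c_iD_i + P^d_i$ with $D_i \geq 0$, maximizing $U^d_i$ is the same as maximizing $c_i$. By Property~\ref{obs:L0Pvalue}, for each $k \in \Gamma_l\setminus\{i\}$ we have $\bar E_k \in \{0, E_k\}$, while $\bar E_i \in \{E_i, 1/R^a_i\}$; I would branch on the two possible values of $\bar E_i$ and keep the better outcome. With $\bar E_i$ fixed, Equation~\ref{eqn:quotient} shows $c_i = \bar E_i \cdot (c_i/\bar E_i)$ is a fixed positive multiple of the quotient, so maximizing $c_i$ reduces to choosing which $s$ of the $l-1$ remaining $\bar E_k$'s to zero out so as to maximize that quotient, where $s$ is the fixed budget-and-attack-set-size bound given in the text.

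First I would rewrite the quotient in terms of the set $S \subseteq \Gamma_l\setminus\{i\}$ of targets we keep (those with $\bar E_k = E_k$), with $|S| = (l-1) - s$. Substituting $\bar E_k \in \{0, E_k\}$ into Equation~\ref{eqn:quotient} gives $c_i/\bar E_i = \big(\sum_{k\in S}(R^a_i - R^a_k)E_k + r\big)/\big(\bar E_i + \sum_{k\in S}E_k\big)$. The key step is to recognize this as a weighted average: define, for each real target $k$, a value $v_k = (R^a_i - R^a_k)E_k$ and a positive weight $w_k = E_k$, and append one mandatory ``dummy'' element $d$ carrying the constants, $v_d = r$ and $w_d = \bar E_i > 0$. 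Then the quotient equals $A(S\cup\{d\})$, and the subproblem becomes: among subsets of the $l-1$ real targets of size $(l-1)-s$, maximize $A(\cdot)$ with $d$ always included. This is exactly the problem of Proposition~\ref{lemma:weighted-average}, solvable in $O(n)$ time.

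The main obstacle is handling the constant offsets $r$ and $\bar E_i$, which make the ratio inhomogeneous and force a mandatory element, so Proposition~\ref{lemma:weighted-average} does not apply verbatim. I would address this by appealing to the parametric structure underlying the algorithm: for a candidate threshold $\lambda$, a subset of size $(l-1)-s$ beats $\lambda$ iff $(v_d - \lambda w_d) + \sum_{k\in S}(v_k - \lambda w_k) \geq 0$, which is maximized by selecting the $(l-1)-s$ largest values $v_k - \lambda w_k$ and then adding the fixed offset $v_d - \lambda w_d$; this leaves the linear-time selection/parametric-search machinery intact. Two further points I would check are that the values $v_k$ may be negative (for targets preceding $i$ in the $R^a$-sorted order) while the weights stay strictly positive, precisely the regime Proposition~\ref{lemma:weighted-average} permits, and that fixing the number of removals at exactly $s$ is without loss inside $Q_{l,i}$, which follows from the encapsulation property defining the subproblem. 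Finally, running both branches for $\bar E_i$ and performing the $O(n)$ feasibility checks (e.g.\ $s \geq 0$ and the resulting $c_j$ lying in $[0,1)$ so that no target is covered with certainty) keeps the total cost at $O(n)$.
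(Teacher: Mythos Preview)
Your reduction is correct, but you handle the additive constants $r$ and $\bar E_i$ differently from the paper. You introduce a mandatory dummy element carrying $(v_d,w_d)=(r,\bar E_i)$ and then argue that the parametric-search machinery behind Proposition~\ref{lemma:weighted-average} survives the affine offset $v_d-\lambda w_d$; this is valid, but it forces you to reopen the black box of the $O(n)$ algorithm and justify that the linear-time guarantee is preserved. The paper instead absorbs the constants directly into the per-target data: since the kept set has fixed cardinality $l-s-1$, one can set $v_k=(R^a_i-R^a_k)E_k+\tfrac{r}{l-s-1}$ and $w_k=E_k+\tfrac{\bar E_i}{l-s-1}$, so that summing over any kept set of that size reproduces exactly the numerator and denominator of Equation~\ref{eqn:quotient}, with all weights still strictly positive. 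This gives a one-line black-box reduction to Proposition~\ref{lemma:weighted-average} with no need to inspect its internals. Both routes are sound; the paper's distribution trick is shorter and sidesteps the extra argument your mandatory-element formulation requires.
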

\begin{proof}
    Consider Equation \ref{eqn:quotient}. We equate $v_k =  (R_i - R_k) E_k + \frac{m}{l-s-1}$ and $w_k = E_k + \frac{1}{l-s-1} \bar E_i$. Let $v = \{v_k: k \in \Gamma_l\backslash\{i\}\}$, $w = \{w_k: k \in \Gamma_l\backslash\{i\}\}$. By Property~\ref{obs:L0Pvalue}, we may assume $s$ targets will be removed. Finding a subset $T \subset \Gamma_l\backslash\{i\}$, $|T| = l - s - 1$, to maximize $A(T)$ is equivalent to our problem to maximize the quotient in Equation~\ref{eqn:quotient}.
\end{proof}
After we find the optimal choices for the $s$ targets, we need to verify 
on Line~\ref{algostep:valid} of Alg.~\ref{algo:L0} 
that the attack set is valid. Since $c_k = 0$ for $k \notin \Gamma_l$, we need $M \geq R^a_{l+1}$, where $M$ is the attacker's expected utility as defined in Equation~\ref{eqn:M}. We also need valid coverage probabilities $c_j$'s. These could have been violated by setting some $\bar P_i$'s to $-\infty$. 

\begin{algorithm}[t]
\scriptsize
    \caption{Algorithm for budget in $L^0$-norm form}                       
    \begin{algorithmic}[1]  \label{algo:L0}
        \REQUIRE Payoffs $\{R^d, P^d, R^a, P^a\}$, budget $B$
        \STATE Initialize $U^d(1..n,1..n) \leftarrow -\infty$.
        \FOR{attack set $\Gamma_l$}
        \FOR{attack target $i \in \Gamma_l$}
        \STATE $\{Value, \Gamma_l^{drop}, \Gamma_l^{keep}\} \leftarrow \text{Random}(\langle v, w\rangle, s$) \label{algostep:random}
        \STATE Set $\bar E_j \leftarrow 0$ for $j \in \Gamma_l^{drop}$, $\bar E_j \leftarrow E_j$ for $j \in \Gamma_l^{keep}$.
        \STATE If solution is valid, i.e. $M \geq R^a_{l+1}$ and $c_j \in [0,1]$, then update $U^d(l,i)$\label{algostep:valid}
        \STATE Repeat inner iteration with $s \leftarrow \min\{B-1, l-2\}$ and $\bar E_i \leftarrow 1/R^a_i$.
        \ENDFOR
        \ENDFOR
        \FOR{target $k$ with largest $s$ $P^a_k$} \label{algostep:certaintystart}
        \FOR{attack target $i$}
        \STATE Update $U^d(l,i)$ if $r$ big enough for $c_k = 1$
        \ENDFOR
        \ENDFOR \label{algostep:certaintyend}
        \RETURN $\max(U^d)$
    \end{algorithmic} 
\end{algorithm}

We are now ready to show the main result of this section.

\begin{restatable}{thm}{lzero}
\label{thm:l0}
    There is a $O(n^3)$ algorithm for finding the optimal defender strategy with budget constraint in $L^0$-norm.
\end{restatable}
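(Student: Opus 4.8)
The plan is to establish correctness and the $O(n^3)$ runtime as separate claims, leaning on the structural facts already in hand. For correctness I would first pin down the search space in which an optimum must lie. By the encapsulation claim there is an index $l^*$ with $\Gamma_{l^*}$ encapsulating the optimal attack set, and by Property~\ref{obs:L0Pvalue} the manipulated penalties are extreme: $\bar P^a_t \in \{0, P^a_t\}$ for the attack target $t$ and $\bar P^a_j \in \{-\infty, P^a_j\}$ for every other $j$, equivalently $\bar E_t \in \{1/R^a_t, E_t\}$ and $\bar E_j \in \{0, E_j\}$. Hence any optimal solution is described by a triple: the bound $l^*$, the attack target $i \in \Gamma_{l^*}$, and the subset of $\Gamma_{l^*}$ whose $\bar E_k$ is forced to $0$. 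Since the two outer loops of Alg.~\ref{algo:L0} range over all $(l,i)$ with $i \in \Gamma_l$, it suffices to show that for the correct pair $(l^*,i)$ the inner step recovers the optimal drop-set.

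Next I would invoke the Lemma: for fixed $(l,i)$ with no target covered with certainty, subproblem $Q_{l,i}$ is exactly the maximum weighted average subset problem, which the $\mathrm{Random}$ subroutine solves optimally in $O(n)$ by Proposition~\ref{lemma:weighted-average}. The two inner calls cover both choices of $\bar P^a_t$: one with $\bar E_i = E_i$ and $s = \min\{B, l-2\}$ (penalty unchanged, full budget available to drop others), and one with $\bar E_i = 1/R^a_i$ and $s = \min\{B-1, l-2\}$ (one unit spent to set $\bar P^a_i = 0$). The bound $s \le l-2$ keeps the retained set $\Gamma_l\setminus\{i\}$ nonempty so that the indifference formula is well defined. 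The validity test on Line~\ref{algostep:valid}---requiring $M \ge R^a_{l+1}$ so no target outside $\Gamma_l$ is attacked, and $c_j \in [0,1]$ so coverage is legal---discards the infeasible configurations, so every retained candidate is feasible and the best retained candidate for $(l^*,i)$ matches the optimum among no-certainty solutions.

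The remaining case is when the optimal solution covers some target with certainty, which condition (iii) of $Q_{l,i}$ excludes; this is handled by the loop on Lines~\ref{algostep:certaintystart}--\ref{algostep:certaintyend}. I would argue that a certainly-covered non-attack target is effectively removed from the game (its attacker utility collapses to $P^a_k$ and it leaves the attack set), while a certainly-covered attack target yields $U^d_i = R^d_i$, and that it is enough to try forcing $c_k = 1$ on the targets with the largest (least negative) $P^a_k$ whenever $r$ is large enough to afford it. I expect this to be the main obstacle: one must justify that restricting attention to those top-penalty targets loses no optimality and that the induced coverage remains feasible, which requires tying the argument back to the ORIGAMI saturation characterization rather than to Equation~\ref{eqn:cj}, since the latter assumes $c_j < 1$.

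Finally, for the runtime I would simply tally the work. Sorting the targets by $R^a_k$ costs $O(n\log n)$. The main double loop runs over $O(n)$ values of $l$ and $O(n)$ attack targets $i$, and each inner $\mathrm{Random}$ call costs $O(n)$ by Proposition~\ref{lemma:weighted-average}, giving $O(n^3)$; the validity check per $(l,i)$ adds only $O(n)$, and the certain-coverage loop contributes at most $O(n^2)$. Summing these yields the claimed $O(n^3)$ bound, completing the theorem.
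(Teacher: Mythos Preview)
Your plan matches the paper's argument almost exactly: enumerate the $O(n^2)$ pairs $(l,i)$, reduce each $Q_{l,i}$ to a maximum weighted-average instance solved in $O(n)$ via Proposition~\ref{lemma:weighted-average}, filter by the validity test on Line~\ref{algostep:valid}, handle certain coverage separately, and tally $O(n^3)$.

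On the obstacle you flag, the paper's resolution is short, and your intuition is slightly off. A certainly-covered target is not ``effectively removed'': if some $k$ in the attack set has $c_k = 1$, the common attacker utility is $M = U^a_k = P^a_k \le 0$, yet any target $j$ outside the attack set has $c_j = 0$ and hence $U^a_j = R^a_j \ge 0 \ge M$, a contradiction. Thus the only admissible attack set in this case is $\Gamma_n = T$. ORIGAMI further forces $P^a_k \ge P^a_j$ for all $j$ in the attack set whenever $c_k = 1$, which is precisely why restricting to the targets with largest $P^a_k$ is lossless. Once such a $k$ (and hence $M = P^a_k$) is fixed, every $c_j$ is determined by $U^a_j = M$, and feasibility is just whether, after spending the remaining budget to delete the most resource-expensive targets, $\sum_j c_j \le r$. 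This makes Lines~\ref{algostep:certaintystart}--\ref{algostep:certaintyend} an $O(n^2)$ postscript rather than the main difficulty.
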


\begin{proof}[Proof sketch]
    Consider Alg.~\ref{algo:L0}. Since $R^a$ is fixed, $\Gamma_l$'s cover all the attack sets that need to be checked. 
    There are $O(n^2)$ subproblems $Q_{l,i}$. For each $Q_{l,i}$, we run a randomized algorithm for the maximum weighted average problem with expected running time $O(n)$ (Line~\ref{algostep:random}). A deterministic $O(n)$ algorithm exists in~\cite{Eppstein:1997:Algorithms:weighted-average}.
    The subproblems $Q_{l,i}$ miss the solutions where some target $j$ is covered with certainty. In this case, $\Gamma_n$ is the only possible attack set, and the solution is found on Lines~\ref{algostep:certaintystart}-\ref{algostep:certaintyend}. A solution is feasible if by removing targets we can keep the sum of coverage probabilities below the defender's resources.
\end{proof}

\section{Experimental Results}
\label{sect:evalution}
\setlength{\textfloatsep}{\textfloatsepsave}
\subsection{Simulation Results for $L^1$ Budget Problem}

We compare our branch-and-bound (BnB) algorithm (Alg.~\ref{algo:l1-pruning}) with three baseline algorithms -- NonConv, multiple MILP, and single MILP.
NonConv refers to solving $n$ non-convex optimization problems as shown in Equations~\ref{program:l1cont}-\ref{constr:nonnegative} using IPOPT~\cite{wachter:2006:MP:implementation} solver with default parameter setting, which converges to local optima with no global optimality guarantee. Multiple MILPs, as specified by Equations \ref{eqn:l1milpbegin}-\ref{eqn:APi-linearize2}, and the single MILP formulation, in the full version, are equivalent and have an approximation guarantee specified in Thm.~\ref{thm:l1milpapprox}. The original payoff structures are randomly generated integers between $1$ and $2n$ with penalties obtained by negation (recall $n$ is the number of targets). Budget and weights of the manipulations are randomly generated integers between $1$ and $4n$. 


We set $\rho_0 = \min_{i\in T}\frac{R^a_i}{4(R^d_i-P^d_i)}$ which gives an additive 
$\frac{1}{2}$-approximate solution. Gurobi is used for solving MILPs, which is terminated when either time limit (15 min) or optimality gap ($1\%$) is achieved. For each problem size, we run $60$ experiments on a PC with Intel Core i7 processor. The solution quality of a particular algorithm is measured by the multiplicative gap between that algorithm and BnB, i.e. $\frac{Z_\text{A}-Z_\text{BnB}}{Z_\text{BnB}}$ where $Z_\text{A}$ is best solution value by algorithm A. Thus a positive (negative) gap indicates better (worse) solution value than BnB. We report mean and standard deviation of the mean of runtime and solution quality in Fig.~\ref{fig:runtime}. Small instances refer to problem sizes from 5 to 25. Large instances refer to problem sizes from 50 to 250.

\begin{figure}[t]
    \subfloat[Runtime, small instances]{\includegraphics[clip, trim=1.15in 3.1in 1.2in 3.3in, width=1.4in]{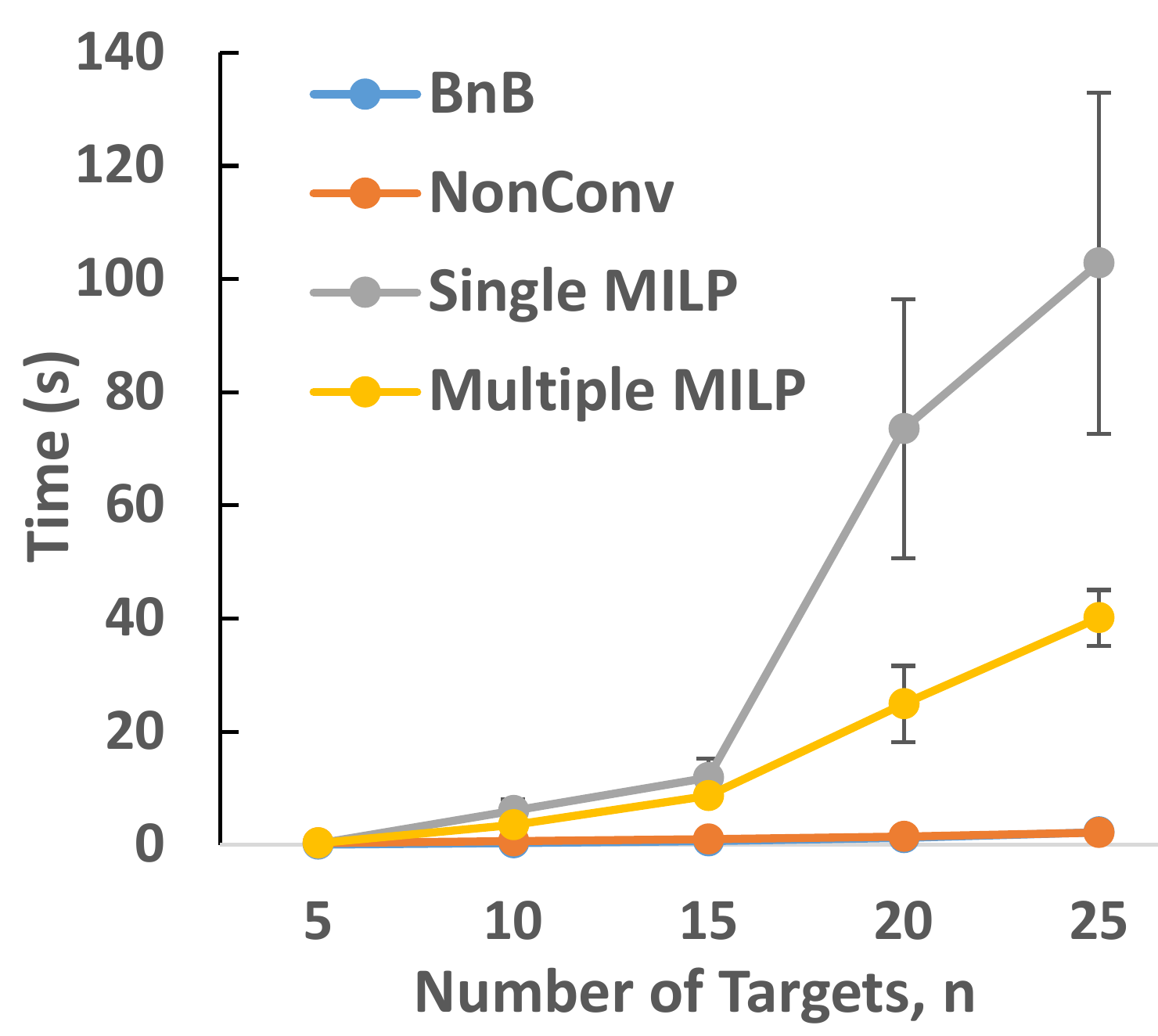}\qquad%
        \label{fig:time_small}}
    \subfloat[Gap, small instances]{\includegraphics[clip, trim=1.15in 3.05in 0.85in 3.3in, width=1.4in]{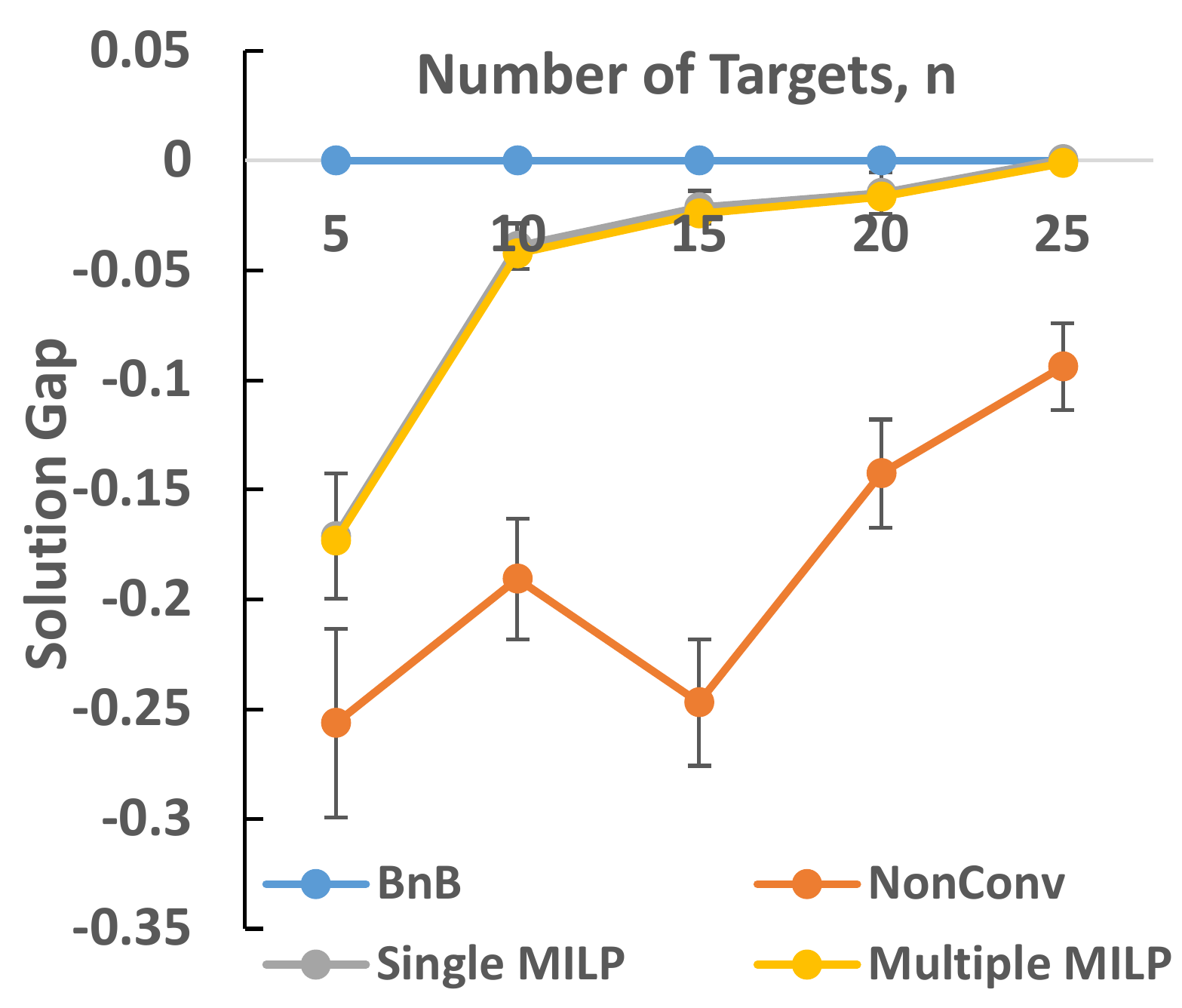}%
        \label{fig:gap_small}}\\
   \subfloat[Runtime, large instances]{\includegraphics[clip, trim=1.15in 3.3in 1.2in 3.5in, width=1.4in]{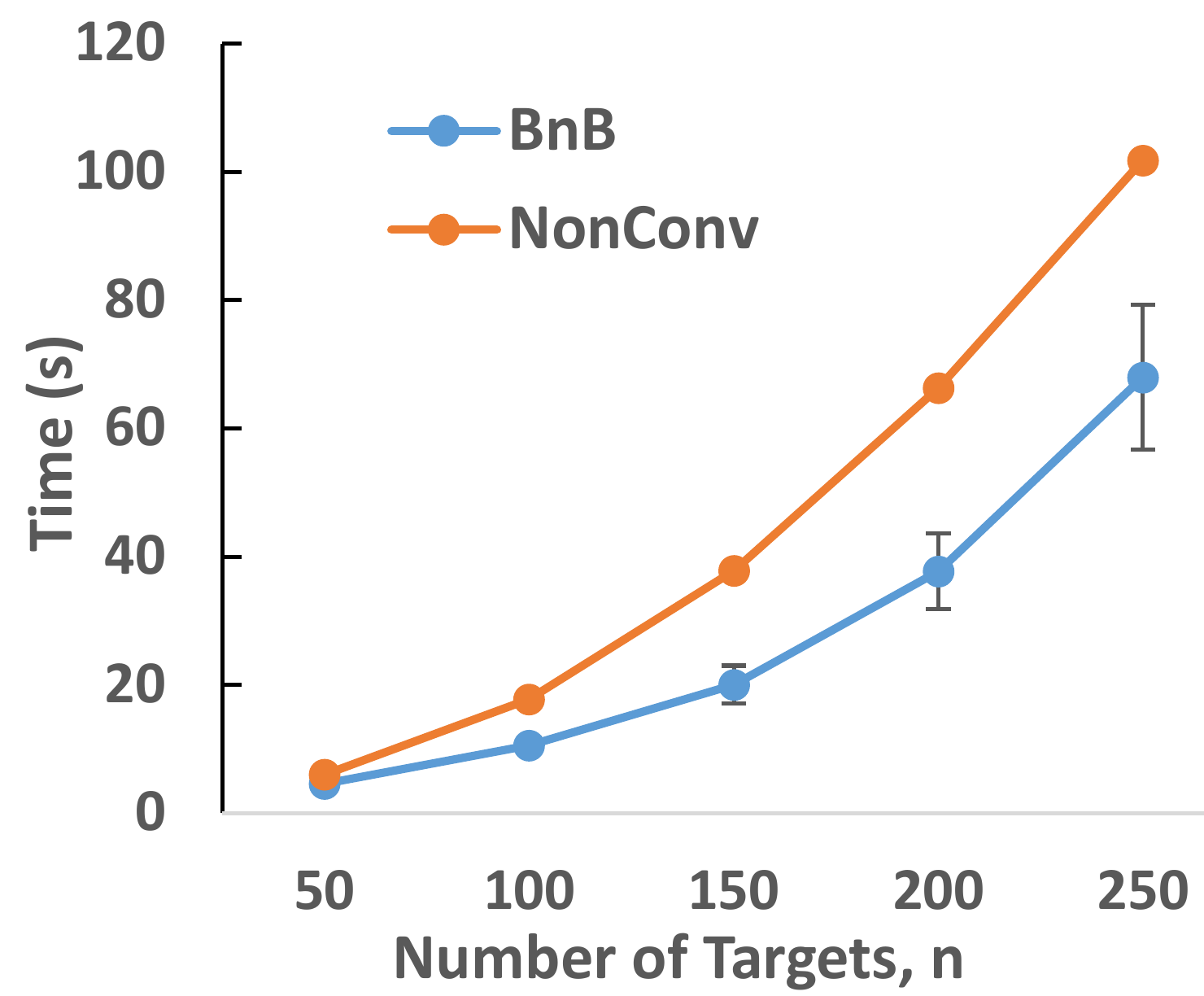}%
        \label{fig:time_big}}\qquad
   \subfloat[Gap, large instances]{\includegraphics[clip, trim=1.5in 3.3in 1.2in 3.3in, width=1.4in]{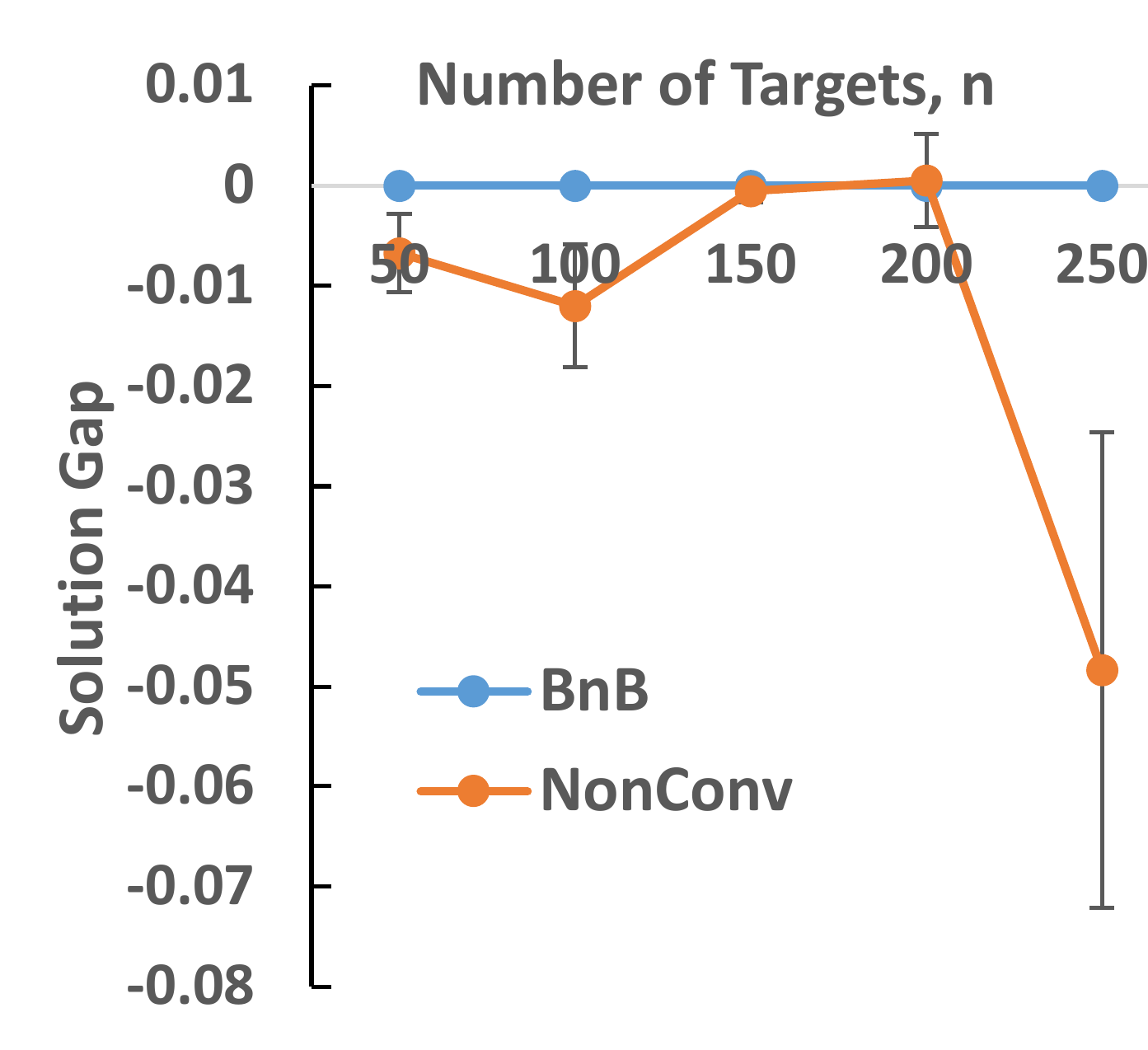}%
        \label{fig:gap_big}}
    \caption{Runtime and solution quality for $L^1$ case with standard deviation of the mean shown as vertical line}\label{fig:runtime}
\end{figure}
For problems of small size (Fig.~\ref{fig:time_small} and \ref{fig:gap_small}), BnB finds better solutions in nearly the same time as NonConv, faster than the other two. Since budget size can easily be indivisible by $\rho_0$ which is the atomic change we can make, greedy manipulation cannot be achieved by MILPs when such indivisibility happens. On the other hand, BnB first computes a global lower bound using such greedy manipulations, thus creating a gap between BnB and the other two MILP-based algorithms. Indeed the multiplicative gap between the greedy solution and the optimal solution is reported as $0.39\%$ with a variance of $0.14\%$. For problems of large size (Fig.~\ref{fig:time_big} and \ref{fig:gap_big}), we only compare BnB and NonConv as the other two algorithms timed out in solving MILP. BnB runs faster than NonConv. It returns better solutions for three problem sizes and nearly the same solution for the other two cases. The MILP-based solution including BnB also has a larger standard deviation in runtime than NonConv. 

\subsection{Simulation Results for $L^0$ Budget Problem}
We compare the performance of our $O(n^3)$ algorithm with a baseline MILP and two greedy algorithms. 
Greedy1 removes a target that can lead to most solution quality increase at a time. Greedy2 starts from the target with highest $|P^d|$ and determines whether to remove it by checking the solution quality before and after removal. Details of these algorithms can be found in the full version.

Initial payoffs are generated in the same way as in the previous subsection. In Fig.~\ref{L0_compare_r1}, we assume the defender has $r = 1$ resource and budget $B = n/2$, the worst case for the $O(n^3)$ algorithm. The runtime of MILP starts to explode with more than 100 targets, while the $O(n^3)$ algorithm solves the problem rather efficiently. We also note that MILP exhibits high variance in runtime. The variances of other algorithms, including the $O(n^3)$ algorithm, are relatively trivial and thus not plotted. We then test the algorithms with multiple defender resources, as shown in Fig.~\ref{L0_compare_rn10}. With $n$ targets, we assume the defender has $r = n/10$ units of resources and a budget $B = n/2$. Most MILP instances reach the time limit of 5 minutes when $n \geq 100$. Yet the $O(n^3)$ algorithm's runtime is almost the same as the single resource case. 

Our $O(n^3)$ algorithm and MILP are guaranteed to provide the optimal solution. In contrast, the greedy algorithms exhibit fast runtime but provide no solution guarantee. We measure the solution quality in Fig.~\ref{L0_greedy_r1} and~\ref{L0_greedy_rn10} using $\frac{U^d_{\text{greedy}} - p}{U^d_{\text{opt}} - p}$ where $p = \min_j P^d_j$. Greedy1, which runs slightly slower than Greedy2, achieves higher solution quality but both greedy algorithms can lead to a significant loss. In fact, extreme examples exist, as shown in the full version.

\begin{figure}[!t]\label{fig:L0_compare}
\subfloat[Resource $r = 1$]
{\includegraphics[clip, trim=1.1in 3.3in 1.2in 3.6in, width=1.4in]{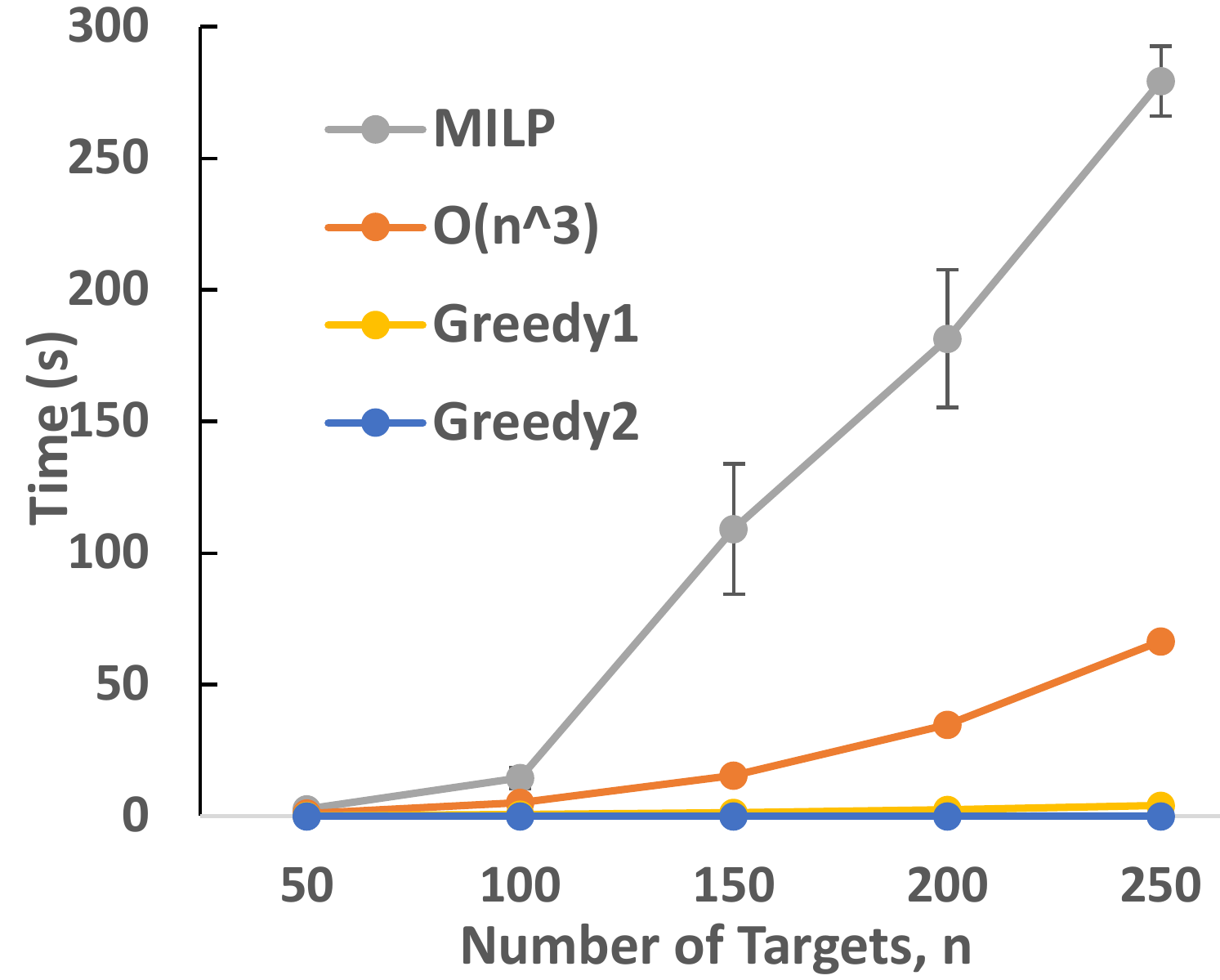}%
        \label{L0_compare_r1}} 
        \qquad
        \subfloat[Resource $r = n/10$]
{\includegraphics[clip, trim=1.1in 3.3in 1.2in 3.6in, width=1.4in]{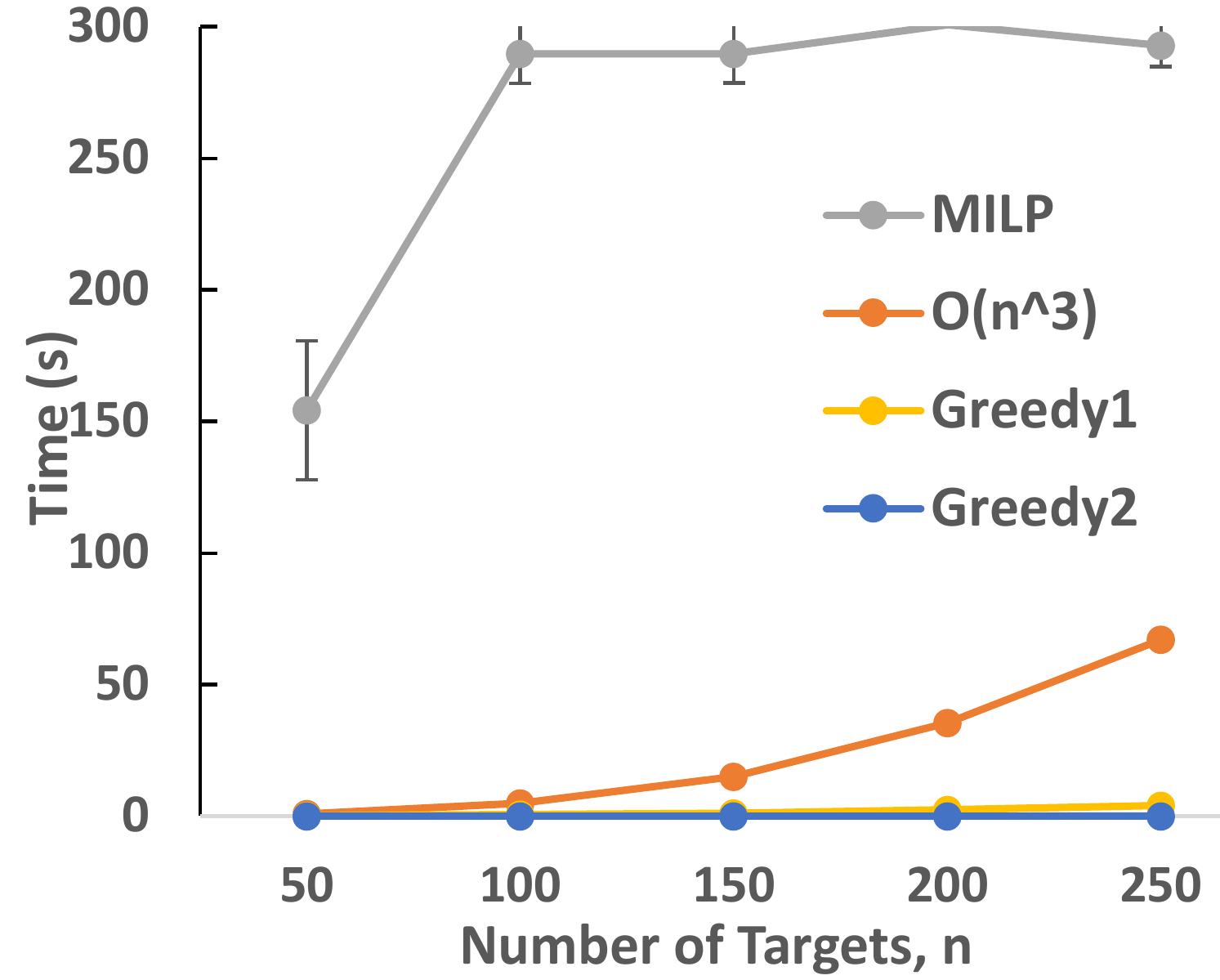}%
        \label{L0_compare_rn10}} \\ 
\subfloat[Resource $r = 1$]
{\includegraphics[clip, trim=1.1in 3.05in 1.1in 3.2in, width=1.4in]{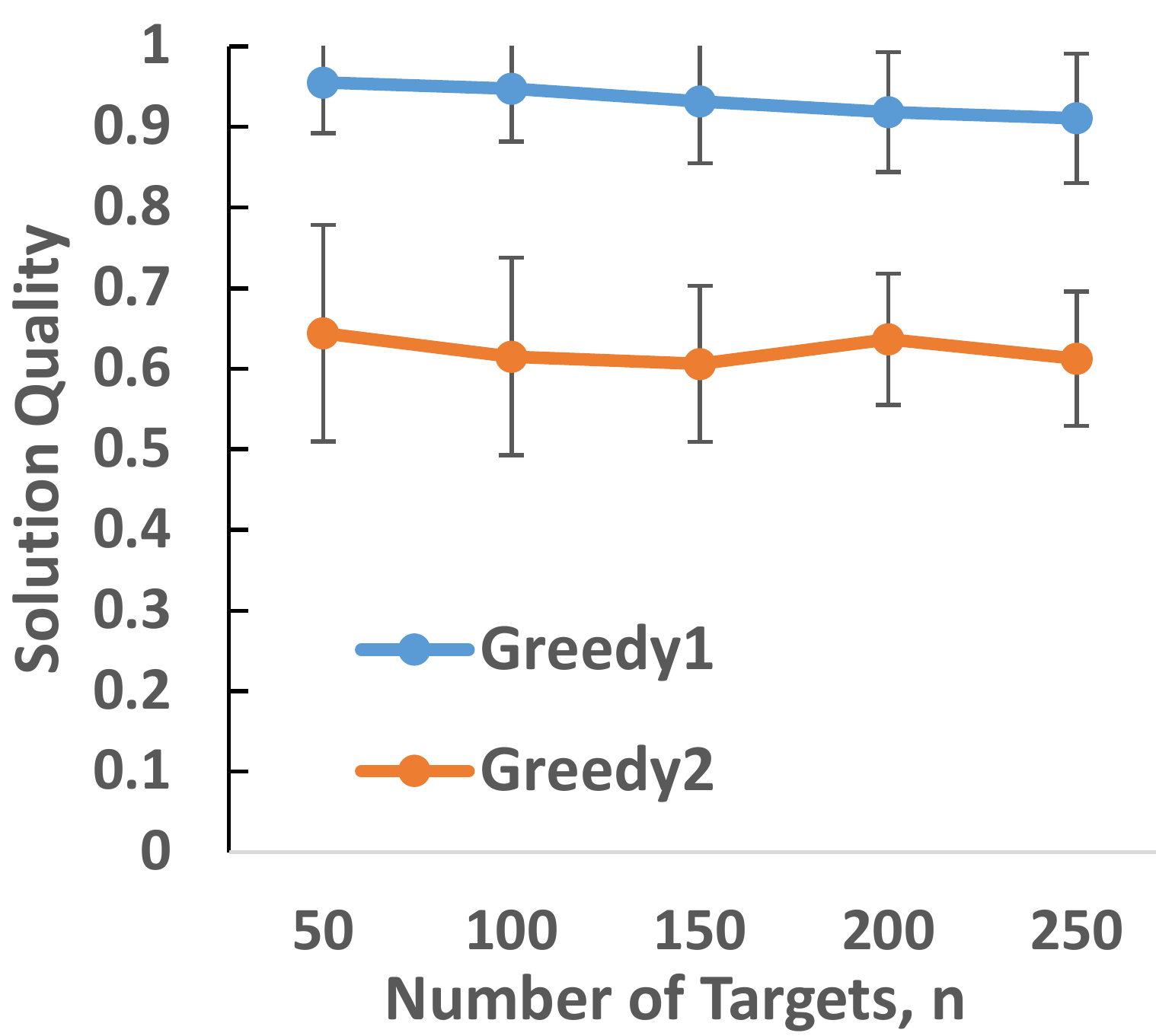}
        \label{L0_greedy_r1}} 
        \qquad
        \subfloat[Resource $r = n/10$]
{\includegraphics[clip, trim=1.1in 3.05in 1.1in 3.2in, width=1.4in]{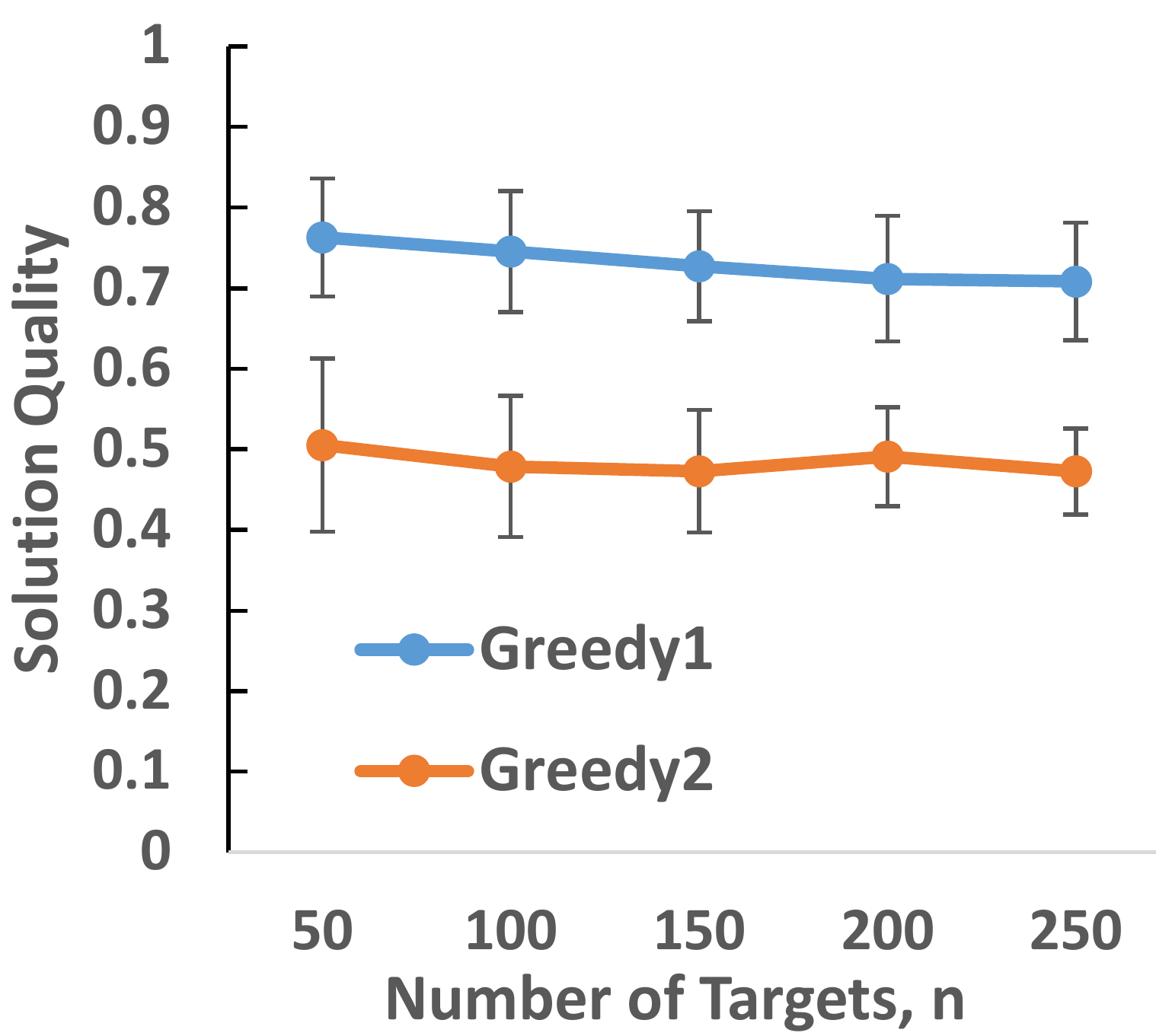}%
        \label{L0_greedy_rn10}} 
    \caption{
    Runtime and solution quality for $L^0$ case averaged over 22 trials. MILP has a time limit of 300 seconds. The error bars are standard deviations of the mean.
    }
\end{figure}

\section*{Acknowledgments}
The research is initiated with the support of the CAIS summer scholar program.

\small
\bibliographystyle{named}
\bibliography{ijcai18}

\begin{thebibliography}{}

\bibitem[\protect\citeauthoryear{Basilico \bgroup \em et al.\egroup
  }{2016}]{Basilico:2016:ARES:software}
N.~Basilico, A.~Lanzi, and M.~Monga.
\newblock A security game model for remote software protection.
\newblock In {\em ARES '16}, pages 437--443, Aug 2016.

\bibitem[\protect\citeauthoryear{Blocki \bgroup \em et al.\egroup
  }{2013}]{Blocki:2013:IJCAI:audit1}
Jeremiah Blocki, Nicolas Christin, Anupam Datta, Ariel~D. Procaccia, and
  Arunesh Sinha.
\newblock Audit games.
\newblock In {\em IJCAI '13}, pages 41--47, 2013.

\bibitem[\protect\citeauthoryear{Blocki \bgroup \em et al.\egroup
  }{2015}]{Blocki:2015:AAAI:audit2}
Jeremiah Blocki, Nicolas Christin, Anupam Datta, Ariel~D. Procaccia, and
  Arunesh Sinha.
\newblock Audit games with multiple defender resources.
\newblock In {\em AAAI'15}, pages 791--797, 2015.

\bibitem[\protect\citeauthoryear{Blum \bgroup \em et al.\egroup
  }{2014}]{Blum:2014:NIPS:learning}
Avrim Blum, Nika Haghtalab, and Ariel~D Procaccia.
\newblock Learning optimal commitment to overcome insecurity.
\newblock In {\em NIPS}, 2014.

\bibitem[\protect\citeauthoryear{Conitzer and
  Sandholm}{2006}]{Conitzer:2006:EC:multipleLP}
Vincent Conitzer and Tuomas Sandholm.
\newblock Computing the optimal strategy to commit to.
\newblock In {\em EC}, 2006.

\bibitem[\protect\citeauthoryear{Donoho and Elad}{2003}]{donoho2003optimally}
David~L Donoho and Michael Elad.
\newblock Optimally sparse representation in general (nonorthogonal)
  dictionaries via l1 minimization.
\newblock {\em Proceedings of the National Academy of Sciences},
  100(5):2197--2202, 2003.

\bibitem[\protect\citeauthoryear{Durkota \bgroup \em et al.\egroup
  }{2015}]{Durkota:2015:IJCAI:netsec}
Karel Durkota, Viliam Lis{\`y}, Branislav Bosansk{\`y}, and Christopher
  Kiekintveld.
\newblock Optimal network security hardening using attack graph games.
\newblock In {\em IJCAI}, 2015.

\bibitem[\protect\citeauthoryear{Eppstein and
  Hirschberg}{1997}]{Eppstein:1997:Algorithms:weighted-average}
David Eppstein and Daniel~S. Hirschberg.
\newblock Choosing subsets with maximum weighted average.
\newblock {\em J. Algorithms}, 24(1):177--193, 1997.

\bibitem[\protect\citeauthoryear{Fang \bgroup \em et al.\egroup
  }{2016}]{Fang:2016:AAAI:paws}
Fei Fang, Thanh~H. Nguyen, Rob Pickles, Wai~Y. Lam, Gopalasamy~R. Clements,
  Bo~An, Amandeep Singh, Milind Tambe, and Andrew Lemieux.
\newblock Deploying paws: Field optimization of the protection assistant for
  wildlife security.
\newblock In {\em AAAI'16}, pages 3966--3973, 2016.

\bibitem[\protect\citeauthoryear{Fujishima \bgroup \em et al.\egroup
  }{1999}]{Fujishima:1999:IJCAI:auction}
Yuzo Fujishima, Kevin Leyton-Brown, and Yoav Shoham.
\newblock Taming the computational complexity of combinatorial auctions:
  Optimal and approximate approaches.
\newblock In {\em IJCAI'99}, pages 548--553, 1999.

\bibitem[\protect\citeauthoryear{Hor{\'a}k \bgroup \em et al.\egroup
  }{2017}]{Horak:2017:GameSec:belief}
Karel Hor{\'a}k, Quanyan Zhu, and Branislav Bo{\v{s}}ansk{\'y}.
\newblock Manipulating adversary's belief: A dynamic game approach to deception
  by design for proactive network security.
\newblock In {\em Decision and Game Theory for Security}. Springer, 2017.

\bibitem[\protect\citeauthoryear{Kang and Wu}{2015}]{Kang:2015:MC:p2pmechanism}
Xin Kang and Yongdong Wu.
\newblock Incentive mechanism design for heterogeneous peer-to-peer networks: A
  stackelberg game approach.
\newblock {\em IEEE Transactions on Mobile Computing}, 14(5):1018--1030, 2015.

\bibitem[\protect\citeauthoryear{Kiekintveld \bgroup \em et al.\egroup
  }{2009}]{Kiekintveld:2009:AAMAS:origami}
Christopher Kiekintveld, Manish Jain, Jason Tsai, James Pita, Fernando
  Ord\'{o}\~{n}ez, and Milind Tambe.
\newblock Computing optimal randomized resource allocations for massive
  security games.
\newblock In {\em AAMAS '09}, pages 689--696, 2009.

\bibitem[\protect\citeauthoryear{Kiekintveld \bgroup \em et al.\egroup
  }{2011}]{Kiekintveld:2011:AAMAS:bayesian}
Christopher Kiekintveld, Janusz Marecki, and Milind Tambe.
\newblock Approximation methods for infinite bayesian stackelberg games:
  Modeling distributional payoff uncertainty.
\newblock In {\em AAMAS}, 2011.

\bibitem[\protect\citeauthoryear{Kiekintveld \bgroup \em et al.\egroup
  }{2013}]{Kiekintveld:2013:AAMAS:interval}
Christopher Kiekintveld, Towhidul Islam, and Vladik Kreinovich.
\newblock Security games with interval uncertainty.
\newblock In {\em AAMAS '13}, 2013.

\bibitem[\protect\citeauthoryear{Kiekintveld \bgroup \em et al.\egroup
  }{2015}]{Kiekintveld:2015:CyberWarfare:honeypot}
Christopher Kiekintveld, Viliam Lis{\`y}, and Radek P{\'\i}bil.
\newblock Game-theoretic foundations for the strategic use of honeypots in
  network security.
\newblock In {\em Cyber Warfare}, pages 81--101. Springer, 2015.

\bibitem[\protect\citeauthoryear{Korzhyk \bgroup \em et al.\egroup
  }{2010}]{Korzhyk:2010:AAAI:multipleresources}
Dmytro Korzhyk, Vincent Conitzer, and Ronald Parr.
\newblock Complexity of computing optimal stackelberg strategies in security
  resource allocation games.
\newblock In {\em AAAI}, 2010.

\bibitem[\protect\citeauthoryear{Laszka \bgroup \em et al.\egroup
  }{2017}]{Laszka:2017:AAAIAICS:alert}
Aron Laszka, Yevgeniy Vorobeychik, Daniel Fabbri, Chao Yan, and Bradley Malin.
\newblock A game-theoretic approach for alert prioritization.
\newblock In {\em AAAI-17 Workshop on Artificial Intelligence for Cyber
  Security (AICS)}, 2017.

\bibitem[\protect\citeauthoryear{Letchford and
  Conitzer}{2013}]{Letchford:2013:AAAI:marginal}
Joshua Letchford and Vincent Conitzer.
\newblock Solving security games on graphs via marginal probabilities.
\newblock In {\em AAAI}, 2013.

\bibitem[\protect\citeauthoryear{Letchford \bgroup \em et al.\egroup
  }{2009}]{Letchford:2009:SAGT:learning}
Joshua Letchford, Vincent Conitzer, and Kamesh Munagala.
\newblock Learning and approximating the optimal strategy to commit to.
\newblock In Marios Mavronicolas and Vicky~G. Papadopoulou, editors, {\em
  Algorithmic Game Theory}, pages 250--262. Springer Berlin Heidelberg, 2009.

\bibitem[\protect\citeauthoryear{Myerson}{1989}]{Myerson:1989:TNP:mechanism}
Roger~B Myerson.
\newblock Mechanism design.
\newblock {\em The New Palgrave: Allocation, Information, and Markets}, 1989.

\bibitem[\protect\citeauthoryear{Paruchuri \bgroup \em et al.\egroup
  }{2008}]{Paruchuri:2008:AAMAS:bsg}
Praveen Paruchuri, Jonathan~P. Pearce, Janusz Marecki, Milind Tambe, Fernando
  Ordonez, and Sarit Kraus.
\newblock Playing games for security: An efficient exact algorithm for solving
  bayesian stackelberg games.
\newblock In {\em AAMAS '08}, 2008.

\bibitem[\protect\citeauthoryear{P{\'\i}bil \bgroup \em et al.\egroup
  }{2012}]{Pibil:2012:GameSec:honeypot}
Radek P{\'\i}bil, Viliam Lis{\`y}, Christopher Kiekintveld, Branislav
  Bo{\v{s}}ansk{\`y}, and Michal P{\v{e}}chou{\v{c}}ek.
\newblock Game theoretic model of strategic honeypot selection in computer
  networks.
\newblock In {\em International Conference on Decision and Game Theory for
  Security}, pages 201--220. Springer, 2012.

\bibitem[\protect\citeauthoryear{Schlenker \bgroup \em et al.\egroup
  }{2018}]{Schlenker:2018:AAMAS:cyber}
Aaron Schlenker, Omkar Thakoor, Haifeng Xu, Milind Tambe, Phebe Vayanos, Fei
  Fang, Long Tran-Thanh, and Yevgeniy Vorobeychik.
\newblock Deceiving cyber adversaries: A game theoretic approach.
\newblock In {\em AAMAS}, 2018.

\bibitem[\protect\citeauthoryear{Sharma and
  Williamson}{2007}]{Sharma:2007:EC:routingmechanism}
Yogeshwer Sharma and David~P. Williamson.
\newblock Stackelberg thresholds in network routing games or the value of
  altruism.
\newblock In {\em EC '07}, pages 93--102, 2007.

\bibitem[\protect\citeauthoryear{W{\"a}chter and
  Biegler}{2006}]{wachter:2006:MP:implementation}
Andreas W{\"a}chter and Lorenz~T Biegler.
\newblock On the implementation of an interior-point filter line-search
  algorithm for large-scale nonlinear programming.
\newblock {\em Mathematical programming}, 106(1):25--57, 2006.

\bibitem[\protect\citeauthoryear{Wang \bgroup \em et al.\egroup
  }{2016}]{Wang:2016:AAAI:terrorist}
Zhen Wang, Yue Yin, and Bo~An.
\newblock Computing optimal monitoring strategy for detecting terrorist plots.
\newblock In {\em AAAI'16}, pages 637--643, 2016.

\bibitem[\protect\citeauthoryear{Xue \bgroup \em et al.\egroup
  }{2016}]{Xue:2016:CP:ecology}
Yexiang Xue, Ian Davies, Daniel Fink, Christopher Wood, and Carla~P. Gomes.
\newblock Behavior identification in two-stage games for incentivizing citizen
  science exploration.
\newblock In {\em Principles and Practice of Constraint Programming}. Springer,
  2016.

\bibitem[\protect\citeauthoryear{Yang \bgroup \em et al.\egroup
  }{2012}]{Yang:2012:Mobicom:mobilephone}
Dejun Yang, Guoliang Xue, Xi~Fang, and Jian Tang.
\newblock Crowdsourcing to smartphones: Incentive mechanism design for mobile
  phone sensing.
\newblock In {\em Mobicom '12}, 2012.

\bibitem[\protect\citeauthoryear{Yin and Tambe}{2012}]{Yin:2012:AAMAS:bayesian}
Zhengyu Yin and Milind Tambe.
\newblock A unified method for handling discrete and continuous uncertainty in
  bayesian stackelberg games.
\newblock In {\em AAMAS '12}, 2012.

\bibitem[\protect\citeauthoryear{Yin \bgroup \em et al.\egroup
  }{2012}]{Yin:2012:IAAI:trusts}
Zhengyu Yin, Albert~Xin Jiang, Matthew~Paul Johnson, Christopher Kiekintveld,
  Kevin Leyton-Brown, Tuomas Sandholm, Milind Tambe, and John~P Sullivan.
\newblock Trusts: Scheduling randomized patrols for fare inspection in transit
  systems.
\newblock In {\em IAAI}, 2012.

\end{thebibliography}
\normalsize

\clearpage
\appendix

\section{Omitted Algorithms}
\subsection{ORIGAMI with Binary Search} \label{sect:ORIGAMI-BS}
\begin{algorithm}[H]
	\caption{ORIGAMI-BS}                 \scriptsize
	\begin{algorithmic}[1] \label{algo:origami-binary-search}
		\REQUIRE Payoffs $\sigma = \{R^d, P^d, R^a, P^a\}$
		\STATE Initialize $start = 0, end=n, c=\mathbb{0}, lastTarget = n-1$.
		\WHILE{$end - start > 1$}
		\STATE Let $mid = \floor{\frac{end+start}{2}}$, attack set $\Gamma \leftarrow \{1,\ldots, mid\}$.
		\STATE Calculate $M$ and $c$ using Equations \ref{eqn:M} - \ref{eqn:cj}
		\IF{$c(k) < 0$ for some $k\in \Gamma$}
		\STATE $end\leftarrow mid$
		\ELSIF{$M < R^a_{mid+1}$}
		\STATE $start \leftarrow mid$
		\ELSE
		\STATE $lastTarget \leftarrow \max\{k\geq mid: R^a_k = M\}$
		\STATE break
		\ENDIF
		\ENDWHILE
		\STATE Let $\Gamma = \{1,\ldots, lastTarget\}$. Let $A = \{k\in\Gamma: c_k >= 1\}.$
		\IF{$A\neq \emptyset$}
		\STATE $covBound = \max\limits_{k\in A}R^a_k$. For all $k\in\Gamma$, set $c_k \leftarrow \frac{R_i-covBound}{R_i-P_i}$
		\ENDIF
		\FOR{$k\notin\Gamma$} 
		\STATE $c_k\leftarrow 0$
		\ENDFOR 
		\FOR{$k\in [n]$} 
		\STATE $defEU(k) \leftarrow c(k)$.
		\ENDFOR
		\RETURN $\max\limits_{k\in\Gamma}defEU(k)$
	\end{algorithmic} 
    \normalsize
\end{algorithm}

\subsection{A single MILP for $L^1$ Budget Problem}
\label{appendix:single-origami-milp}
\begin{align}
\min \qquad & d\\
s.t. \qquad &\epsilon_j = \rho_0\sum_k 2^k y^k_j-R^a_j \label{eqn:APi-binary1-appendix}\\
&\delta_j=-\rho_0\sum_k 2^k z^k_j-P^a_j \label{eqn:APi-binary3-appendix}\\
&-\epsilon^+_j \leq \epsilon_j \leq \epsilon^+_j\\ 
&-\delta^+_j \leq \delta_j \leq \delta^+_j\\
&\sum_{j=1}^n (\mu_j \epsilon^+_j+\sigma_j \delta^+_j) \leq B\\
&R_j^a+\epsilon_j \geq 0\\
&P_j^a+\delta_j \leq 0\\
&0\leq \alpha^k_j\leq y^k_j,\quad c_j -(1-y^k_j)\leq \alpha^k_j\leq c_j \label{eqn:APi-linearize1-appendix}\\
&0\leq \beta^k_j\leq z^k_j,\quad c_j-(1-z^k_j)\leq \beta^k_j\leq c_j \label{eqn:APi-linearize2-appendix}\\
&v_j = \rho_0\sum_k 2^k(y^k_j - \alpha^k_j - \beta^k_j )\\ 
&v_j \leq U \label{constr:upper_attEU-appendix} \\
&v_j \geq U-(1-\gamma_j)Z \label{constr:lower-attEU-appendix}\\
&d \leq R^d_jc_j + P^d_j(1-c_j) + (1-\gamma_j)Z\label{constr:att-defEU-appendix}\\
&\sum_j\gamma_j = 1\\
&y^k_j, z^k_j,\gamma_j \in \{0,1\}\\
&\sum_j c_j \leq 1, \quad c_j\geq 0
\end{align}
All constraints involving sub/super-script $j, k$ without a summation apply to all proper range of summation indices.

We first introduce non-negative integer variables $\epsilon^+_j$, $\delta^+_j$ and constraints $\epsilon^+_j\geq \epsilon_j$, $\epsilon^+_j\geq -\epsilon_j$, $\delta^+_j\geq \delta_j$, $\delta^+_j\geq -\delta_j$ to replace the absolute value change. We use binary representation for $\bar R^a$ and $\bar P^a$ in constraints~\ref{eqn:APi-binary1-appendix}-\ref{eqn:APi-binary3-appendix}. Specifically, we have $\bar R_j^a = \sum_k 2^k y^k_j, P_j^a=-\sum_k 2^k z^k_j$ where $y^k_j, z^k_j$ are 0-1 variables. Recall from Prop.~\ref{property:l1opt} we can assume without loss of generality that $\epsilon_t, \delta_t\geq 0$ and $\forall j\neq t, \epsilon_j,\delta_j\leq 0$. 

After the above reformulation, notice that we have bilinear term like $y^k_jc_j$ involved in the formulation. We introduce real-valued variables $\alpha^k_j, \beta^k_j$ and constraints~\ref{eqn:APi-linearize1-appendix}-\ref{eqn:APi-linearize2-appendix} to enforce $\alpha^k_j=y^k_jc_j$ and $\beta^k_j = z^k_j c_j$. We then introduce binary variables $\gamma_t$ to indicate whether target $t$ is in the attack set (the set of targets with highest attacker utility) and constraint~\ref{constr:upper_attEU-appendix}-\ref{constr:lower-attEU-appendix} to enforce that the attack target has the highest attacker's expected utility. Constraint~\ref{constr:att-defEU-appendix} enforces $d$ to be upper bounded by the defender's expected utility of the attack target. Therefore maximizing $d$ gives the defender's expected utility of the attack target.

\subsection{Baseline Algorithms for $L^0$ Budget Problem}\label{sect:l0appendix}
\begin{align}
\max \qquad & d\\
s.t. \qquad & d - U_\Theta(t, C) \leq (1 - a_t) Z,\qquad \forall t\\
& -w_t Z \leq k - U_\Psi(t, C) \leq (1 - a_t) Z, \qquad \forall t\\
& U_\Theta(t, C) = c_t R^d_t + (1-c_t) P^d_t, \qquad \forall t\\
& U_\Psi(t,C) \geq (1-c_t)R^a_t + c_t P^a_t - b_t Z, \qquad \forall t\label{eqn:l0milpAttEU1}\\
& U_\Psi(t,C) \leq (1-c_t)R^a_t + c_t P^a_t + b_t Z, \qquad \forall t\\
& U_\Psi(t,C) \geq (1-c_t)R^a_t + (1-b_t) P^a_t,\qquad \forall t\\
& U_\Psi(t,C) \leq (1-c_t)R^a_t + (b_t - 1) P^a_t,\qquad \forall t\label{eqn:l0milpAttEU4}\\
& c_t \in [0,1], \qquad \sum_t c_t \leq 1\\
& b_t \leq a_t, \qquad \forall t\\
& w_t + a_t \leq 1, \qquad \forall t\label{eqn:l0milpremove}\\
& \sum_t w_t + \sum_t b_t \leq B\\
& \sum_t a_t = 1\\
& a_t, w_t, b_t \in \{0,1\}
\end{align}
In this MILP, $a_t$ indicates whether target $t$ is attacked; $b_t$ indicates whether the attacker's penalty on target $t$ is set to 0; $w_t$ indicates whether target $t$ is removed (attacker's penalty on target $t$ is set to $-\infty$); $c_t$ is the coverage probability.

Equations~\ref{eqn:l0milpAttEU1}-\ref{eqn:l0milpAttEU4} ensure that for non-attack targets and for attack targets whose attacker's penalty is not set to 0, $U_\Psi(t,C) = (1-c_t)R^a_t + c_t P^a_t $; for the attack target whose attacker's penalty is set to 0, $U_\Psi(t,C) = (1-c_t)R^a_t$. This formulation assumes $P^a_t \leq 0$. Equation~\ref{eqn:l0milpremove} says the defender can only remove a non-attack target.

\begin{algorithm}
\scriptsize
	\caption{L0-Greedy1}                       
	\begin{algorithmic}[1]
		\REQUIRE Payoffs $\{R^d, P^d, R^a, P^a\}$, budget $B$
		\STATE Initialize $defEURemove(1..n) = defEUZero(1..n) = -\infty$.
		\FOR{greedy step $i$ = 1 \TO B}
		\FOR{each remaining target $t$ with $P^a_t \neq 0$}
		\STATE defEURemove(t) = ORIGAMI(Payoffs with target $t$ removed)	
		\ENDFOR
		\IF{$P^a_t \neq 0$ for all targets $t$}
		\FOR{each remaining target $t$}
		\STATE defEUZero(t) = ORIGAMI(Payoffs with $P^a_t= 0$)
		\ENDFOR
		\ENDIF
		\IF{max(defEURemove) $\geq$ max(defEUZero)}
		\STATE Remove the target with max(defEURemove)
		\ELSE 
		\STATE Set $P^a_t = 0$ for the target $t$ with max(defEUZero)
		\ENDIF 
		\ENDFOR
		\RETURN max(max(defEURemove), max(defEUZero))
	\end{algorithmic}
\normalsize
\end{algorithm}

The baseline L0-Greedy1 is not an exact algorithm, which can give arbitrarily large error with the following example.

Suppose we have 3 targets, budget is 2. Consider the following payoff matrix on the left. The optimal solution is to eliminate targets 1 and 2, extracting all $R^d_3 = 10$ as defender's utility. Yet L0-Greedy1 sets $P^d_2 = 0$ in the first iteration. In the second iteration, it is indifferent about removing targets 1 or 3, both giving a final defender's utility of 0.476. We can scale up all values with absolute value 10 to arbitrarily large, and the resulting defender's utility is still very small.

\begin{tabular}{|c|c|c|c|}
	\hline
	& $t_1$ & $t_2$& $t_3$\\
	\hline 
	$R^d$	& 1	&1	&10\\ 
	\hline 
	$P^d$	&-1	&-10& 	-10\\ 
	\hline 
	$R^a$	&1	&10&	1 \\ 
	\hline 
	$P^a$	&-10&	-10&	-10\\
	\hline
\end{tabular} 
\begin{tabular}{|c|c|c|c|}	
	\hline
	& $t_1$ & $t_2$& $t_3$\\
	\hline 
	$R^d$	& 10	&1	&1\\ 
	\hline 
	$P^d$	&-1.1	&-1 & 	-0.9\\ 
	\hline 
	$R^a$	&1	&1&	1 \\ 
	\hline 
	$P^a$	&-1&	-1&	-1\\
	\hline
\end{tabular}

The L0-Greedy2, as shown in Alg.~\ref{algo:l0greedy2}, is also not exact. Consider the above payoff matrix on the right. The optimal solution is to eliminate targets $t_2$ and $t_3$, extracting $U^d = R^d_1 = 10$. However, since ORIGAMI yields a uniform coverage, target $t_1$ is removed at the first iteration, thus the defender can achieve utility of 1 at maximum.

\begin{algorithm}
\scriptsize
	\caption{L0-Greedy2}                       \label{algo:l0greedy2}
	\begin{algorithmic}[1]
		\REQUIRE Payoffs $\{R^d, P^d, R^a, P^a\}$, budget $B$
        \WHILE{Budget allows}
        \STATE Solve the game with ORIGAMI
        \STATE If $j$ is in the attack set, remove it, $j \leftarrow j + 1$.
        \ENDWHILE
		\RETURN max(max(defEURemove), max(defEUZero))
	\end{algorithmic}
\normalsize
\end{algorithm}

\section{Full Proofs Omitted in Text}\label{append:proof}
\optproperty*
\begin{proof}
	\textbf{Condition 1} In an optimal solution $(c,\epsilon,\delta)$, by the principle of ORIGAMI, we know that $c_j = 0, \forall j \notin \tau(c,\epsilon,\delta)$. We build an optimal solution $(c, \epsilon^1,\delta^1)$ and show that it satisfies Condition 1. If $\epsilon_j > 0$ or $\delta_j \neq 0$, we can safely let $\epsilon_j^1 = 0$ and $\delta_j^1 = 0$ and target $j$ is still outside $\tau(c,\epsilon,\delta)$. If $\epsilon_j < 0$, we can increase $\epsilon_j$ to $\epsilon_j^1$, such that $j$ gets included in the attack set with $c_j = 0$. Hence, the solution $(c,\epsilon^1,\delta^1)$ is optimal and satisfies Condition 1. From now on, we assume such an optimal solution $(c, \epsilon^1, \delta^1)$ exists.
	
	\noindent\textbf{Condition 2} We only prove $\epsilon_t \geq 0$, others follow similarly. Suppose an optimal solution $(c,\epsilon^1,\delta^1)$ satisfying Condition 1 is such that $\epsilon_t^1 < 0$. Let $\delta^2 = \delta^1, \forall j\neq t, \epsilon^2_j = \epsilon^1_j$ and $\epsilon^2_t = -\epsilon^1_t$. Note that $(c, \epsilon^2, \delta^2)$ is a feasible solution with the same objective $c_i$ and hence optimal. Let $c^2$ be the coverage determined by ORIGAMI with payoff structure $(\epsilon^2, \delta^2)$, then once again $(c^2, \epsilon^2, \delta^2)$ is optimal. We apply the same reasoning as the previous paragraph, and get that some $(c^2,\epsilon^3,\delta^3)$ is an optimal solution satisfying condition 1 and $\epsilon^3_t \geq 0, \epsilon^3_j \leq 0$ for $j\neq t$.
	
	\noindent\textbf{Condition 3} We only prove the second part; the first part follows similarly. Now we have an optimal solution $(c^2,\epsilon^3,\delta^3)$ satisfying Conditions 1 and 2. Note that when some $c^2_j \leq 1/2$, having $\epsilon^3_j > 0$ and $\delta^3_j = 0$ yields the same $U^a_j$ as having $\epsilon^4_j = 0$ and $\delta^4_j = \epsilon^3_j (1-c^2_j)/c^2_j$. However, the former choice is more budget efficient. This holds unless the required change is excessive such that $R^a_j - \epsilon^3_j = 0$ and $\delta^3_j < 0$. Therefore, an optimal solution $(c^2, \epsilon^3,\delta^3)$ satisfying Conditions 1 and 2 can be easily modified to an optimal solution $(c^2, \epsilon^4, \delta^4)$ which also satisfies Condition 3.
%
\end{proof}

\begin{proof}[Proof of the non-convexity of Subproblem $\mathcal{P}_i$]\label{pf:l1_non_convex}
	Consider the constraints involving quadratic terms, we can rewrite the constraint in terms of variable $X = (c,\epsilon,\delta,1)$ where the last constant $1$ is used to generate linear terms. Using $X$, each quadratic constraint can be written in the form of $X^TQX\leq 0$ where all diagonals of $Q$ except the last one is zero. It is not hard to see that $Q$ is indefinite.
\end{proof}

\milpapprox*
\begin{proof}
The floor and ceiling notations in the sequel are about the ``integral grid'' defined by $\rho_0$. Let $(c^*, \epsilon^*, \delta^*)$ be an optimal solution to the subproblem $P_i$. We construct a feasible solution $(c', \epsilon', \delta')$ with $c'_i$ close to $c_i^*$. Let $U^{a*}_i$ and $U^{a'}_i$ be the attacker's expected utilities.

Let $\epsilon' = \lfloor \epsilon^* \rfloor$, $\delta' = \lfloor \delta^* \rfloor$, and $c' = c^*$ except $c_i' = c_i^* - \frac{2\rho_0}{D_i + \epsilon_i' - \delta_i'}$. For the attack target $i$, we have
\begin{equation}
\begin{split}
&U^{a*}_i - U^{a'}_i \\
&= (\epsilon_i^* - \epsilon_i') - D_i(c_i^* - c_i') - (c_i^* \epsilon_i^* - c_i'\epsilon_i') + (c_i^* \delta_i^* - c_i'\delta_i')\\
&= (1-c_i^*) (\epsilon_i^* - \epsilon_i') + c_i^*(\delta_i^* - \delta_i') - 2\rho_0\\
&\leq -\rho_0
\end{split}
\end{equation}
For non-attack targets $j$, we have
\begin{equation}
\begin{split}
&U^{a*}_j - U^{a'}_j \\
&= (\epsilon_j' - \epsilon_j^*) - D_i(c_j^* - c_j') + (c_j^* \epsilon_j^* - c_j'\epsilon_j') - (c_j^* \delta_j^* - c_j'\delta_j')\\
&= (1-c_j^*) (\epsilon_j' - \epsilon_j^*) - c_j^*(\delta_j^* - \delta_j')\\
&\geq -\rho_0
\end{split}
\end{equation}

Therefore, the solution $(c', \epsilon', \delta')$ is feasible, and we have $c_i' \geq c_i^* - \frac{2\rho_0}{R^a_i}$. By solving all atomic subproblems, we have an additive $\max_i (R^d_i - P^d_i) \frac{2\rho_0}{R^a_i}$-approximation to the original problem.

\end{proof}

\twotarget*
\begin{proof}
Since the budget is limited, for each target exactly one of its reward and penalty can be manipulated, i.e. $\forall i\in T$ exactly one of $\epsilon_i$ and $\delta_i$ is non-zero. Let $\Gamma$ be the set of targets whose reward or penalty is manipulated, let $t$ be the attack target. Below we only consider the case where reward has been manipulated. Others follow similarly due to symmetry. Let $i^* =\arg\min_{i\in \Gamma\backslash\{t\}}\epsilon_i(1-c_i)$. For each $i\in \Gamma\backslash\{t\}$, decrease $\epsilon_i$ by $\frac{(1-c_i)\epsilon_{i^*}}{1-c_{i^*}}$. Note by the definition of $i^*$ we have $\epsilon_i \geq \frac{(1-c_i)\epsilon_{i^*}}{1-c_{i^*}}$, i.e. we won't make some reward negative when decreasing decreasing $\epsilon$ as above. Now we increase the reward of attack target $t$ by $\frac{(1-c_t)\epsilon_{i^*}}{1-c_{i^*}}$. It is not hard to see after those manipulations, those targets that were in the attack set before are still in the attack set and that the objective value does not decrease. Below we show that it is feasible to increase the reward of attack target by $\frac{(1-c_t)\epsilon_{i^*}}{1-c_{i^*}}$. To do so we only need to show $\sum_{i\in \Gamma\backslash\{t\}} \geq 1-c_t$. Note by decreasing the manipulation on other targets, we obtain $\sum_{i\in \Gamma\backslash\{t\}}\frac{\epsilon_{i^*}(1-c_i)}{1-c_{i^*}}$ available budgets. Since we assume only reward is manipulated, it must be that $c_i \leq \frac12, \forall i\in T$. As a result $\sum_{i\in \Gamma\backslash\{t\}}(1-c_i)\geq \sum_{i\in \Gamma\backslash\{t\}}c_i\geq 1-c_t$.
\end{proof}

\fptas*
\begin{proof}
Suppose the optimal solution has $i$ as the attack target and $j$ is the only other target in the attack set that has been manipulated. Let the optimal solution be $c^*, \epsilon^*, \delta^*$. w.l.o.g. assume rewards are manipulated, i.e. $\epsilon^*_i > 0, \epsilon^*_j < 0$. Other cases follow similarly. Let $q$ be an integer such that $q\eta \leq \epsilon^*_i < (q+1)\eta$. Below we show the solution $c$ where $c_i = c^*_i - \frac{\eta}{D_i}$ and $\forall k\neq i, c_k = c^*_k$ is a feasible solution when $\epsilon_i = q\eta, \epsilon_j = q\eta-B$. The theorem then follows.

When $\epsilon_i = q\eta, \epsilon_j = q\eta-B$ with coverage probability exactly $c^*$, $attEU(i)$ decreases by $(1-c^*_i)(\epsilon^*_i - \epsilon_i) \leq \eta$ by our choice of $q$. To compensate this decrease to obtain a feasible solution, we decrease $c^*_i$ by $\frac{\eta}{D_i}$, which increases $attEU(i)$ by $\frac{\eta}{D_i}(D_i + \epsilon_i) > \eta$.  Also note that decreasing $R^a_j$ by $\epsilon_j$ maintains feasibility.
\end{proof}

\linfty*
\begin{proof}
	With target $i$ being attacked, we have $U^a_i \geq U^a_j$ for all $j \in T$. The defender maximizes $c_i$.
	\begin{equation} \label{eqn:linftyci}
	c_i = \max_{c, \bar P^a, \bar R^a} \min_{j \in T-\{i\}}  \frac{\bar R^a_i - \bar R^a_j + (\bar R^a_j - \bar P^a_j) c_j}{\bar R^a_i - \bar P^a_i}
	\end{equation}
	Let $(c, \hat P^a, \hat R^a)$ be an optimal solution. Let $\bar{P^a}$ be such that $\bar{P^a_i} = \min(0, P^a_{i}-B^p_i)$ and $\bar{P^a_j} = P^a_{i}+B^p_i$. Let $\bar{R^a}$ be such that $\bar{R^a_i} = R^a_{i}+B^r_i$ and $\bar{R^a_j} = \max(0, R^a_{i})$. Note that $(c, \bar{R^a}, \bar{P^a})$ is also an optimal solution. It is obvious that minimizing $|P^a_i|$, maximizing $|P^a_j|$, and minimizing $R^a_j$ should make the inequality still hold. To see that the defender should maximize $\bar R^a_i$, we can rewrite the RHS of Equation~\ref{eqn:linftyci} as
	\begin{equation}
	c_i \leq 1 + \frac{\bar P^a_i - \bar R^a_j (1 - c_j) - c_j \bar P^a_j}{\bar R^a_i - \bar P^a_i}, \qquad \forall j \in T
	\end{equation}
	If the numerator is nonnegative, the value of $R^a_i$ does not matter because $c_i$ as probability takes a maximal value of 1. Otherwise, we see that maximizing $\bar R^a_i$ maximizes the RHS. Therefore, the defender may always use $\bar{R^a}$ and $\bar{P^a}$ for the attacker's reward and penalty and solve the linear program for coverage probability $c$, as in fixed-payoff security games.
\end{proof}

\lzero*
\begin{proof}
	By the ORIGAMI algorithm, we need only check the attack sets $\Gamma_l = \{1,2,3,\dots,l\}$ for $l = 1,2,\dots,n$. Since $R^a$ is fixed, the attack sets that need to be checked in the optimal payoff structure (i.e. after manipulation) are contained in the attack sets $\Gamma_l$'s that we will check in the initial payoff structure. For example, suppose that in the final optimal solution, the defender removes targets 1,3,4, and the attack set is $\{2,5,6\}$ with target 7 left outside the attack set. Then, when we check the attack set $\{1,2,3,4,5,6\}$ in the initial payoff structure, we will find this solution.
	
When solving subproblem $Q_{l,i}$, if the choice violates $M \geq R^a_{l+1}$ or $c_j \geq 0$, we drop it on Line~\ref{algostep:valid} of Algorithm~\ref{algo:L0}. By ORIGAMI, the general optimal choice $\bar E^*$, which is discovered as an optimal choice for Equation \ref{eqn:quotient}, will also satisfy these conditions.
    
The subproblems $Q_{l,i}$ miss out solutions when some target $k$ is covered with certainty. ORIGAMI shows that in this case, $P^a_k \geq P^a_j$ for all $j \in \Gamma_l$. Furthermore, since $R^a_j \geq 0 \geq P^a_j$, the only possible attack set is $\Gamma_n = T$. Once such a target $k$ is fixed, the coverage $c_i$ on the attack target $i$, the objective value, is also fixed. We need only check whether the defender has enough resources to maintain the attack set after removing the most costly targets.
\end{proof}

\section{An instance where greedy modification for $L^1$ case is sub-optimal}
\label{appendix:counterexample}
Our goal is to find instances for the greedy manipulation to fail. We assume only reward is manipulated in the counterexample. Since defender's payoff can be arbitrarily set, we assume $t$ is the optimal attack target. By Thm.~\ref{thm:l1-2target} there is at most one more target manipulated in the optimal solution. We assume $j$ is that target. First note $1-c_t < 1-c_j$ as otherwise we won't push to $j$. We show the idea of shifting manipulation from $t$ to $j$ so that the solution value increases. Use $k$ to denote other targets in the attack set. For simplicity let's consider the case that in the greedy manipulation, all targets appear in the attack set, which is easy to construct. Denote $c_t, \epsilon_t$ to be solutions to the greedy manipulation. Let $\deps_i$ be the amount of $\epsilon_i$ to push to $j$. 

We can visualize the shifting process as follows: at the start we have greedy manipulation solution, where each target has a bin of the same height as their expected utility. Then we shift $\deps_t$ to $j$ which lowers the bin $t$ and $j$. Because $1-c_t < 1-c_j$, bin $j$ is lower than bin $t$. Then we decrease $c_j$ to lift bin $j$ and increase $c_k$ to lower bin $k$. Let $\dc_j$, $\dc_k$ be the absolute change. If $\dc_j > \sum_{k\in\Gamma\backslash\{t,j\}}\dc_k$, we have additional unused coverage which can be redistributed to lower all bins, which as a result increases $c_t$. Now quantify this process and see how to make the numbers work out. Below I will write $\Delta EU$ as absolute change in attacker's expected utility.

First note $\Delta EU(t) = \deps_t(1-c_t)$ and similarly for target $j$. To push bin $k$ down to bin $t$, the least increase in $c_k$ satisfies $\dc_kD_k = \Delta EU(t)\Rightarrow\dc_k = \frac{\deps_t(1-c_t)}{D_k}$. On the other hand we want to lift up bin $j$, the least decrease in $c_j$ satisfies $\deps_j(D_j-\deps_t) = \Delta EU(j) - \Delta EU(t)\Rightarrow \dc_j = \frac{(c_t-c_j)\deps_t}{D_t-\deps_t}$. We require 
\begin{align}
\dc_j \leq c_j\Rightarrow\deps_t \leq \frac{c_jD_j}{c_t}\label{constr:enough-cj}
\end{align}

Finally the constraint $\sum_{k\in\Gamma\backslash\{t,j\}}\dc_k < \dc_j$ is now equivalent to:
\begin{align}
&\sum_{k\in\Gamma\backslash\{t,j\}}\frac{\deps_t(1-c_t)}{D_k} < \frac{\deps_t(c_t-c_j)}{D_j-\deps_t} \\
\Rightarrow &\sum_{k\in\Gamma\backslash\{t,j\}}\frac{(1-c_t)}{D_k} < \frac{c_t-c_j}{D_j-\deps_t}
\end{align}
A sufficient condition to guarantee this is 
\begin{align}
\sum_{k\in\Gamma\backslash\{t,j\}}\frac{(1-c_t)}{D_k} \leq  \frac{c_t-c_j}{D_j}\Rightarrow \sum_{k\in\Gamma\backslash\{t,j\}}D_k \geq \frac{1-c_t}{c_t-c_j}D_j \label{constr:enough-coverage}
\end{align}

As a summary, as long as condition~\ref{constr:enough-cj} and \ref{constr:enough-coverage} is satisfied, we have a counterexample (when all manipulations are on reward of course). A concrete example is given below. Assume optimal attack target is $1$.
$ R^a_1 = R^a_2 = 2, P^a_1 = P^a_2 = -2. \forall k=3,\ldots, 9, R^a_k = 1.5, P^a_k = -8.5$. Budget $B=1$. It is easy to see $EU(k) = 1$ for the greedy manipulation to use budget to increase $R^a_1$ (increasing $P^a_1$ gives a worse solution). And $c_1 = \frac25, c_2 = \frac14, c_k =\frac{1}{20}$. One can verify both conditions can be satisfied.

\end{document}